\definecolor{lemon}{HTML}{FDFFCC}
\definecolor{Gainsboro}{rgb}{0.86, 0.86, 0.86}
\definecolor{Gray}{gray}{0.95}
\definecolor{LightCyan}{rgb}{0.88,1,1}
\definecolor{dm-blue-500}{RGB}{0, 69, 177}
\definecolor{dm-purple-500}{RGB}{105,50,230}
\definecolor{dm-red-500}{RGB}{255,122,122}
\definecolor{backred}{RGB}{255, 190, 190}
\definecolor{backblue}{RGB}{220, 230, 250}
\newtcbox{\hlprimarytab}{on line, rounded corners, box align=base, colback=backblue, colframe=white, size=fbox, arc=3pt, before upper=\strut, top=-2pt, bottom=-4pt, left=-2pt, right=-2pt, boxrule=0pt}
\newtcbox{\hlsecondarytab}{on line, box align=base, colback=backred, colframe=white, size=fbox, arc=3pt, before upper=\strut, top=-2pt, bottom=-4pt, left=-2pt, right=-2pt, boxrule=0pt}
\newcommand{\ours}{\textsc{DRO}\xspace}
\newcommand{\todo}[1]{\textcolor{black}{#1}}
\newtheorem{lemma}{Lemma}
\newtheorem{remark}{Remark}[section]
\newcommand\headernodot{\def\@toclevel{4}%
  \@startsection{paragraph}{4}{\z@}%
  {-.1\baselineskip \@plus -1\p@ \@minus -.1\p@}%
  {-2.5\p@}%
  {\ACM@NRadjust{\bfseries}}}
\newcommand{\header}[1]{\headernodot{#1.}}
\acrodef{rag}{retrieval-augmented generation}
\acrodef{RAG}{retrieval-augmented generation}
\acrodef{LLM}{large language model}
\newcommand{\perm}{\boldsymbol{z}}
\newcommand{\doc}{\boldsymbol{d}}
\newcommand{\elbo}{\text{ELBO}}
\newcommand{\estep}{document permutation estimation\xspace}
\newcommand{\mstep}{re-weighted maximization\xspace}
\begin{document}


\title[Direct Retrieval-augmented Optimization: Synergizing Knowledge Selection and Language Models]{Direct Retrieval-augmented Optimization: \\ Synergizing Knowledge Selection and Language Models}

\author{Zhengliang Shi}
\orcid{0000-0002-9658-4906}
\affiliation{%
  \institution{Shandong University} 
  \city{Qingdao}
  \country{China}
  \postcode{266237}
}
\email{zhengliang.shii@gmail}

\author{Lingyong Yan}
\orcid{0000-0002-6547-1984}
\affiliation{%
  \institution{Baidu Inc.}
    \city{Beijing}
  \country{China}
}
\email{lingyongy@gmail.com}

\author{Weiwei Sun}
\orcid{0000-0002-4817-9500}
\affiliation{%
  \institution{Carnegie Mellon University} 
      \city{Pittsburgh}
  \country{United States}
}
\email{sunweiwei@gmail.com}

\author{Yue Feng}
\orcid{0009-0000-6665-6406}
\affiliation{%
  \institution{University of Birmingham}
    \city{Birmingham}
  \country{UK}
}
\email{y.feng.6@bham.ac.uk}

\author{Pengjie Ren}
\orcid{0000-0003-2964-6422}
\affiliation{%
  \institution{Shandong University} 
    \city{Qingdap}
  \country{China}
}
\email{jay.r@outlook.com}

\author{Xinyu Ma}
\orcid{0000-0002-5511-9370}
\affiliation{%
  \institution{Baidu Inc.}
    \city{Beijing}
  \country{China}
}
\email{xinyuma2016@gmail.com}

\author{Shuaiqiang Wang}
\orcid{0000-0002-9212-1947}
\affiliation{%
  \institution{Baidu Inc.}
    \city{Beijing}
  \country{China}
}
\email{shqiang.wang@gmail.com}

\author{Dawei Yin}
\orcid{0000-0002-0684-6205}
\affiliation{%
  \institution{Baidu Inc.}
    \city{Beijing}
  \country{China}
}
\email{yindawei@acm.org}

\author{Maarten de Rijke}
\orcid{0000-0002-1086-0202}
\affiliation{%
  \institution{University of Amsterdam} 
        \city{Amsterdam}
  \country{Netherland}
}
\email{m.derijke@uva.nl}

\author{Zhaochun Ren}
\orcid{0000-0002-9076-6565}
\affiliation{%
  \institution{Leiden University}
        \city{Leiden}
  \country{Netherland}
}
\authornote{Corresponding author.}
\email{z.ren@liacs.leidenuniv.nl}

\renewcommand{\shortauthors}{Zhengliang Shi et al.}


\renewcommand{\shortauthors}{Shi et al. (SDU, Baidu, CMU, UoB, UvA, Leiden)}

\begin{abstract}
\Acf{RAG} integrates \acp{LLM} with retrievers to access external knowledge, improving the factuality of LLM generation in knowledge-grounded tasks.
To optimize the RAG performance, most previous work independently fine-tunes the retriever to adapt to frozen \acp{LLM} or trains the \acp{LLM} to use documents retrieved by off-the-shelf retrievers, lacking end-to-end training supervision.
Recent work addresses this limitation by jointly training these two components but relies on overly simplifying assumptions of document independence, which has been criticized for being far from real-world scenarios.
Thus, effectively optimizing the overall RAG performance remains a critical challenge.

We propose a \textbf{d}irect \textbf{r}etrieval-augmented \textbf{o}ptimization framework, named \ours, that enables end-to-end training of two key components: 
\begin{enumerate*}[label=(\roman*)]
\item a generative knowledge selection model and 
\item an LLM generator.
\end{enumerate*}
\ours alternates between two phases: 
\begin{enumerate*}[label=(\roman*)]
\item document permutation estimation and 
\item re-weighted maximization, progressively improving RAG components through a variational approach.
\end{enumerate*}
In the estimation step, we treat \textit{document permutation} as a latent variable and directly estimate its distribution from the selection model by applying an importance sampling strategy.
In the maximization step, we calibrate the optimization expectation using importance weights and jointly train the selection model and LLM generator.
Our theoretical analysis reveals that \ours is analogous to policy-gradient methods in reinforcement learning.
Extensive experiments conducted on five datasets illustrate that \ours outperforms the best baseline with 5\%--15\% improvements in EM and F1.
We also provide in-depth experiments to qualitatively analyze the stability, convergence, and variance of \ours.\footnote{Code is available on \href{https://github.com/mangopy/direct-rag-learning}{\faGithub~GitHub}.}
\end{abstract}

\begin{CCSXML}
<ccs2012>
   <concept>
       <concept_id>10002951.10003317.10003338</concept_id>
       <concept_desc>Information systems~Retrieval models and ranking</concept_desc>
       <concept_significance>500</concept_significance>
       </concept>
 </ccs2012>
\end{CCSXML}

\ccsdesc[500]{Information systems~Retrieval models and ranking}

\keywords{Retrieval-augmented generation, List-wise knowledge selection, Importance sampling, Expectation-Maximization principle}

\maketitle

\acresetall

\vspace{-2mm}
\section{Introduction}\label{sec:intro}

Large language models (LLMs) have shown remarkable text generation abilities; however, they often provide factually incorrect content~\cite{zhang2023siren, dhuliawala2023chain,su2024mitigating} due to the hallucination~\cite{huang2023survey} or out-of-date information~\cite{fan2024survey}.
To mitigate these limitations, \acf{RAG} is proposed to integrate external retrievers with LLMs, which enables the model to access extensive corpora and retrieve relevant documents for references, thereby enhancing factuality. 
By integrating the retriever with LLMs, RAG has shown superior performance in knowledge-intensive tasks such as question answering~\cite{wang2024domainrag, shi2024generate} and conversational information seeking~\cite{dinan2018wizard,li2024corpuslm,yang2024rag}.

Following the most widely used architecture~\cite{Lewis2020RetrievalAugmentedGF,gao2023retrieval,fan2024survey}, \ac{RAG} typically includes two components to answer an input query:
\begin{enumerate*}[label=(\roman*)]
    \item \textit{knowledge selection}, where retrieval and re-ranking models select target documents,
    \item \textit{answer generation}, where an LLM generator generates correct answers conditioned on the selected documents.
\end{enumerate*}
To enhance coverage and improve answer quality, \ac{RAG} models often provide multiple retrieved documents as input to the generator. The interrelationships among these documents are crucial for final performance~\cite{liu2024lost, min2020ambigqa,hofstatter2023fid,zamani2024stochastic}. We refer to a specific selection of retrieved documents as a \textit{document permutation}.

\begin{figure}[!t]
    \centering
    \includegraphics[width=0.7\linewidth]{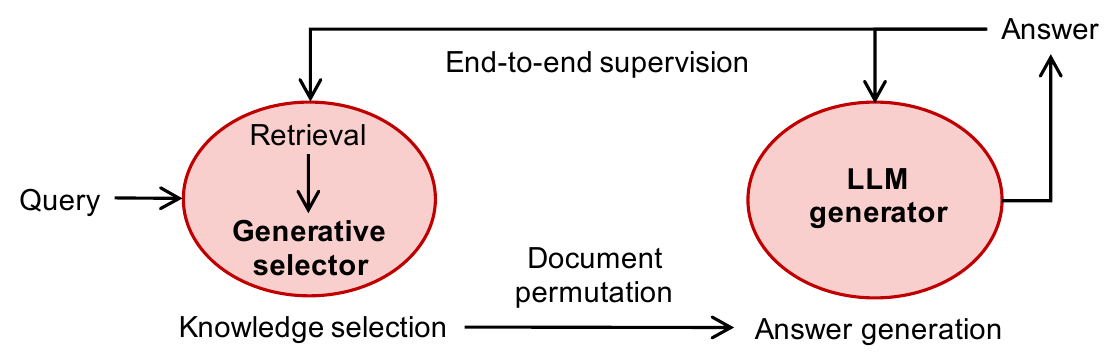}
    \caption{Overview of \ours objective. The selection model  directly estimate a document permutation for the generator to predict an answer, with both components trained jointly.} 
\end{figure}\label{fig:intro}

\header{Improving RAG performance}
To optimize \ac{RAG} performance, some studies improve knowledge accuracy by fine-tuning the retrieval or ranking model with relevance criteria~\cite{moon1996expectation,shi2023replug, pradeep2023rankzephyr}. Others enhance the robustness of LLMs against irrelevant content through supervised fine-tuning~\cite{yu2024rankrag, fang2024enhancing} or in-context learning~\cite{shi2024generate}, teaching them to summarize key points from retrieved documents. However, these approaches optimize either the \textit{selection} or the \textit{generation} component separately while neglecting a dual enhancement, which may lead to sub-optimal overall performance~\cite{lin2023ra,erqrag}.
 
To address the above limitation, some recent studies train both the retriever and the LLM generator~\cite{Lewis2020RetrievalAugmentedGF, eerr, li2024rag}. However, to avoid optimizing over the complex permutation distribution of inter-related documents, most of them simplify the retriever outputs into point-wise documents rather than the overall permutations~\cite{zamani2024stochastic}. Specifically, they first retrieve the top-k documents using a dense retriever, independently feed each document into downstream LLMs for training, and train a \textit{point-wise dense retriever} to assign a higher similarity to the document that leads to better downstream outcomes. \textit{However, this is far from reality, as RAG models often gather useful content from multiple documents}~\cite{zamani2024stochastic, wang2024richrag}, which suffers from a pronounced limitation in complex scenarios, such as multi-hop QA. Besides, training a dense retriever requires the frequent updating of the document index~\cite{guu2020retrieval}, which is hard to operate and potentially non-compatible with established retrieval applications. Thus, a natural question is: \textit{How to synergize 
\begin{enumerate*}[label=(\roman*)]
    \item knowledge selection; and
    \item answer generation
\end{enumerate*}
to optimize holistic RAG performance?}

In this paper, we propose \ours, a direct retrieval-augmented optimization method that synergizes (i) a list-wise selection model (\textit{aka}, selector) to generate target document permutations, and (ii) an LLM generator to predict the answer, enabling end-to-end improvement. As shown in Figure~\ref{fig:intro}, the core idea is to directly treat the \textit{document permutation} as a latent variable and estimate its distribution to maximize the log-likelihood of question answering. To achieve this, \ours iteratively alternates between two steps: 
\begin{enumerate*}[label=(\roman*)]
\item \estep and 
\item \mstep 
within the Expectation-Maximization principle~\cite{moon1996expectation}.
\end{enumerate*}

In the \textit{permutation estimation} step, we first define an ideal posterior distribution of document permutation inspired by the classic expectation-maximization algorithm~\cite{moon1996expectation}, introducing a tractable evidence lower bound (ELBO) for the log-likelihood objective.
Considering exactly computing the posterior distribution is typically impractical, we employ an importance sampling strategy~\cite{kloek1978bayesian,elvira2021advances} to directly estimate the document permutation distribution by sampling from the selection model.
Specifically, for each input query, we first recall relevant documents using an off-the-shelf dense retriever, filtering documents with no semantic relevance to narrow down the candidate documents. 
The generative selector then selects a subset of documents by generating their document identifiers in an auto-regressive manner (e.g., \texttt{"[1] > [2] > [3]"}).

In the \textit{re-weighted maximization} step, we optimize the ELBO constructed in the estimation step, thereby improving the overall log-likelihood for question answering.
We first re-weight the collected samples using importance weights to calibrate the bias introduced by sampling shifting from the importance sampling strategy.
Then, we jointly optimize the selection model and the LLM generator by maximizing this re-weighted expectation, where both two components are trained with end-to-end supervision.
By alternating the estimation and maximization steps, \ours progressively improves the holistic RAG performance.

\header{Theoretical analysis}
\ours differs from prior work such as~\cite{li2024rag, eerr, Lewis2020RetrievalAugmentedGF} by:
\begin{enumerate*}[label=(\roman*)]
    \item enabling \textit{end-to-end optimization} for both knowledge selection and answer generation, rather than optimizing individual processes; and
    \item \textit{directly estimating} the distribution of document permutations for optimization, relaxing the assumption of independent top-$k$ marginalization posed in prior works;
    \item \todo{\textit{iteratively aligning} of selection and generation models, which achieves a consistent improvement until convergence}. 
\end{enumerate*}
To investigate the advantages of \ours, we provide a theoretical analysis of the learning objective and optimization process within our framework. We prove that \ours shares similarities with policy-based reinforcement learning approaches. In \ours, the selection module improves by reinforcing document permutations that enhance generation performance, while the generator, in turn, benefits from improved document permutations, creating a synergistic loop that optimizes the entire RAG process. 
Additionally, we provide theoretical analysis about the training convergence and stability of \ours.
We  reveal that importance sampling with normalized weights can guarantees variance reduction and non-decreasing ELBO across iterations.

\header{Experiments}
We conduct extensive experiments across a wide range of datasets, including Natural Questions~\cite{kwiatkowski-etal-2019-natural}, 
HotpotQA~\cite{yang2018hotpotqa}, 
2WikiMultihopQA~\cite{xanh2020_2wikimultihop}, MusiQue~\cite{trivedi2021musique}, and
Wizard-of-Wikipedia~\cite{dinan2018wizard}.
The results show that the proposed \ours outperforms best baselines with 5--15\% improvement in EM and F1 metrics.
Additionally, the selection model trained using our method achieves an average precision improvement of 17.78\% in identifying target documents.
We further conduct fine-grained analyses to examine the variance, convergence, and stability of the \ours during the training process. 
We observe substantial variance decay with increased sampling size, and consistent  improvements (e.g., F1 score) over iterations, indicating stable and convergent optimization.
These findings verify that \ours achieves not only strong performance but also robust training dynamics across datasets.

\header{Contributions} 
The main contributions of this paper are:
\begin{enumerate*}[label=(\roman*)]
\item We propose \ours, a direct retrieval-augmented optimization method that treats the document permutation as a latent variable to enable an end-to-end improvement;
\item We provide theoretical analysis for the learning objective of the proposed method and demonstrate its convergence and training stability; and 
\item extensive experiments conducted on five datasets show the improvement of our method, e.g., 5\%--15\% improvement compared with state-of-the-art baselines.
\end{enumerate*}

\section{Related work}

\subsection{Retrieval-augmented generation}
Retrieval-augmented generation (RAG) aims to integrate external knowledge into LLMs, improving their factuality~\cite{gao2023retrieval}.
Given an input query, the first process of RAG is to select relevant knowledge, which is typically done by retrieval~\cite{karpukhin2020dense} or ranking model~\cite{moon1996expectation}.
Subsequently, an LLM generator incorporates these candidate documents to generate an answer.
An active research question is how to improve the overall RAG performance.
Some studies aim to improve the knowledge accuracy~\cite{sachan-etal-2021-end, dong2024understand}, such as fine-tuning an answer-aware dense retriever~\cite{shi2023replug, sachan2022improving} or introducing additional modules for document filtering~\cite{wang2023learning, xu2023recomp}.
Other work alternatively enhances the robustness of LLMs to irrelevant content, enabling LLMs to adaptively extract supporting facts from the retrieved documents~\cite{zhu2024atm, yoran2023making}.
However, these methods either optimize the retrieval or the generation process without dual enhancement, potentially leading to sub-optimal performance~\cite{lin2023ra}.
Although existing work proposes the end-to-end training paradigm,  they overly simplify a marginalization optimization through \textit{independent top-k approximation}~\cite{zamani2024stochastic, sachan-etal-2021-end}, where they simply feed top-k documents into downstream LLMs one-by-one and re-score their relevance to optimize the retriever~\cite{Lewis2020RetrievalAugmentedGF, lin2023ra}.
This has been criticized far from the practical scenarios as the RAG system typically consumes multiple documents~\cite{zamani2024stochastic}, \todo{while exhaustively enumerating all possible document permutations is cost-intensive and typically infeasible in practice.}
In this work, we propose \ours, which directly treats the document permutation as a latent variable and estimates its distribution for optimization.

\subsection{Knowledge Selection for RAG}
In RAG, the \textit{knowledge selection} process aims to select target documents that can maximize LLM generation performance~\cite{guo2020deep,sachan2022improving}. To achieve this, prior work typically trains point-wise rankers (e.g., MonoT5~\cite{moon1996expectation}, BGE~\cite{bge_embedding}) on conventional retrieval benchmarks (e.g., MS-MARCO~\cite{bajaj2016ms}) and separately judges the relevance of each document to the input query. 
In contrast, our method applies a generative list-wise selection model~\cite{sun2023chatgpt}, which selects target documents for the input query by generating corresponding document identifiers auto-regressively~\cite{pradeep2023rankzephyr, pradeep2023rankvicuna}. Compared with point-wise ranking, our list-wise selection enables the comparison between multiple documents~\cite{wang2024richrag, ma2023zero}, which can be inherently used to estimate the permutation distribution in our framework. Additionally, unlike previous ranking models trained with semantic relevance criteria~\cite{rao2019bridging}, our selection model is jointly trained with the generator to maximize end-to-end RAG performance.

\subsection{Variational approach for optimization}
The variational approach has been widely applied in unsupervised scenarios and optimization involving latent variables~\cite{sordoni2024joint,sun2020approach}, such as GLEM in graph learning~\cite{zhao2022learning} and SSDM in speech modeling~\cite{lian2024ssdm}.
As a general principle, the variational approach such as the Expectation-Maximization algorithm (EM)~\cite{moon1996expectation}, alternates between the \textit{Expectation} step to compute the posterior distribution and the \textit{Maximization} step to update the model parameter, optimizing the marginalization progressively.
Inspired by this principle, in our work, we treat the document permutation as a latent variable.
Besides, we directly estimate the permutation distribution using a list-wise selection model through the importance sampling~\cite{kloek1978bayesian}.
This strategy diverges from the standard EM algorithm by avoiding the exact computation of the posterior distribution, a process that is impractical due to the large-scale document permutation space.

\section{Preliminaries}

\subsection{Task Definition}
Given an input query $x$, the task of retrieval-augmented generation (RAG) typically consists of two processes:
\begin{enumerate*}[label=(\roman*)]
    \item \textit{knowledge selection} to acquire a set of relevant documents; and
    \item \textit{answer generation} to generate a correct answer $y$ by referring to the acquired documents.
\end{enumerate*}
In the proposed \ours, our first process starts by using an off-the-shelf retrieval model (e.g., ColBERT~\cite{colbertv2}) to recall relevant documents $\doc = \{d_1, d_2, \ldots, d_{|\doc|}\}$ through semantic matching, thereby filtering the irrelevant contents to narrow down the candidates. Then, a generative re-ranking model $\theta_s$ reads these documents and selects a document permutation $\perm$ by generating the corresponding document identifiers such as \texttt{[1] > [2] > \dots > [k]}. Subsequently, an LLM generator $\theta_g$ predicts the answer $y$ based on the input query $x$ and selected documents $\doc_{\perm}$, which can be formulated as $p(y \mid x, \doc_{\perm}; \theta_g) = \prod\nolimits_{t=1}^{|y|} p(y_{t} \mid y_{<t}, x, \doc_{\perm}; \theta_g)$.

\subsection{Generative Knowledge Selection for RAG}
The generative selection model identifies the target documents in a list-wise manner, directly taking the query $x$ and the candidate documents $\doc$ as input, and generates an ordered sequence of document \textit{identifiers} auto-regressively:
\begin{equation}\label{eq:rank}
   p(\perm \mid x ;\theta_s) = \prod\nolimits_{t=1}^{K} p(\perm_{t} \mid \perm_{<t}, x, \doc; \theta_s).
\end{equation}
Here, $\perm=\{\perm_t; t \in [K]\}$ represents the document permutation consisting of $K$ tokens, where each token corresponds to a document identifier (e.g., \texttt{[1]} or \texttt{[2]}). By mapping the document permutations back to the original documents, we obtain $K$ selected documents, denoted as $\doc_{\perm}=\{\doc_{\perm_t} \mid t \in [K]\}$. Compared with traditional point-wise ranking, which assigns each individual document a relevance score to the query, we use this generative selection model for two reasons:
\begin{enumerate*}[label=(\roman*)]
    \item it inherently enables the comparison of multiple candidate documents through the attention mechanism~\cite{vaswani2017attention,Sun2023GenerativeKS}; and 
    \item it allows the direct modeling of document permutations, ordered lists of documents that capture their interrelationships by autoregressively generating the docid list~\citep{Min2021JointPR}.
\end{enumerate*}
\section{Direct RAG optimization}\label{sec:method}

\begin{figure*}[!t]
        \centering
	\includegraphics[width=
 \linewidth]{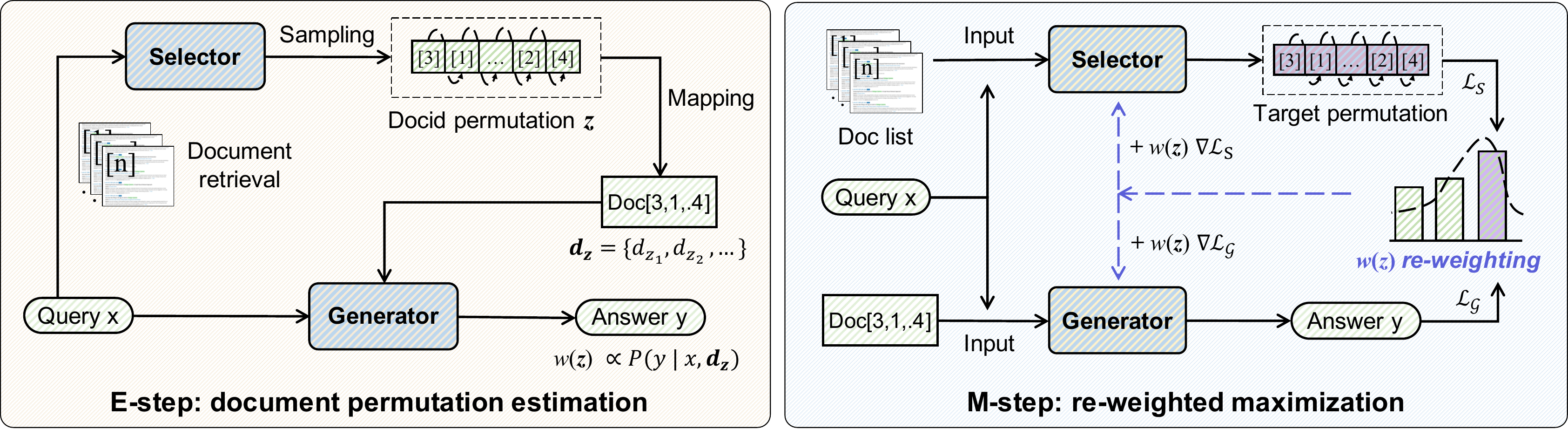}
        \caption{The overall framework for \ours alternates between the (i) E-step: document permutation estimation (Section~\ref{sec:estimation}); and (ii) M-step: re-weighted maximization (Section~\ref{sec:maximization}) to progressively optimize the holistic RAG performance.}\label{fig:method}
\end{figure*}

In this work, the proposed \ours method improves the holistic performance of RAG by synergizing (i) knowledge selection and (ii) answer generation processes. To achieve this, \ours\ enables the end-to-end training of 
\begin{enumerate*}[label=(\roman*)]
    \item a list-wise generative selector with parameters $\theta_s$ to estimate the target document permutation for the input query $x$; and
    \item an LLM generator with parameters $\theta_g$ to generate accurate answers.
\end{enumerate*} 
For simplicity, we use $\theta = (\theta_s, \theta_g)$ to represent the overall tunable parameters of \ours\ throughout the paper.
\todo{Below, we first derive the learning objective of \ours and then relate it to how to achieve the end-to-end optimization of the selection and generation model for such an objective.}

\subsection{Deriving the \ours objective}
\todo{As pointed out by prior work, generating a correct answer grounded on reference knowledge is identified as one of the most crucial goals of RAG tasks~\cite{Lewis2020RetrievalAugmentedGF,gao2023retrieval,erqrag}.}
In \ours, we define the optimization objective as maximizing the marginal log-likelihood of generating the correct answer $y$ to the input query $x$ using external documents $\doc_{\perm}$.
This can be formulated as:
\begin{equation}
    \begin{aligned}
        \log p(y|x; \theta)  &=  \log \sum_{\perm} p(y, \doc_{\perm} | x; \theta).
    \end{aligned}\label{eq:margin}
\end{equation}
Since summing over all possible $(y, \perm)$ is typically intractable, we employ a variational approach to construct a tractable lower bound for $\log p(y | x;\theta)$ that we then maximize. 
Specifically, we introduce a variational distribution $q(\perm|x)$ and apply \textit{Jensen's inequality} to $\log \sum_{\perm} p(y, \doc_{\perm} | x; \theta)$:
\begin{equation}
    \begin{aligned}\label{eq:decomp}
     \log \sum_{\perm}q(\perm | x)\frac{p(y, \doc_{\perm} | x; \theta)}{q(\perm | x)} & \geq \sum_{\perm} q(\perm | x) \log \frac{p(y, \doc_{\perm} | x; \theta)}{q(\perm | x)} \\
        &= \underbrace{\mathbb{E}_{\perm \sim q(\perm | x)} \left[ \log \frac{p(y, \doc_{\perm} | x; \theta)}{q(\perm | x)}\right].}_{\text{\textbf{E}vidence \textbf{L}ower \textbf{BO}und ($\elbo(q, \theta)$)}}
    \end{aligned}
\end{equation}
Here, we define the $\mathbb{E}_{\perm \sim q(\perm \mid x)} \left[ \log \frac{p(y, \doc_{\perm} \mid x; \theta)}{q(\perm \mid x)} \right]$ as an evidence lower bound $(\elbo)$ for $\log p(y|x; \theta)$. Based on Jensen's inequality, the $\elbo(q,\theta) = \log p(y|x; \theta)$ if and only if when the $\frac{p(y, \doc_{\perm} | x; \theta)}{q(\perm | x)} \equiv c$, where $c$ is a constant. 

From Eq.~\eqref{eq:decomp}, we can derive the \ours objective as a progressive optimization process, which includes: (i) estimating the distribution of document permutation $\perm$ to achieve $\log p(y | x; \theta) \approx \elbo$, and (ii) maximizing the $\elbo$ to improve $\log p(y | x;\theta)$.

\header{Expectation-Maximization for \ours training}
To \todo{optimize} the \ours objective, we adopt the expectation-maximization algorithm with an importance sample strategy.
In more detail, we start by demonstrating the condition for the alignment in Eq.~\eqref{eq:decomp}, which is formulated in the following lemma.

\begin{lemma}\label{lemma:elbo}
For $\elbo(q,\theta) = \log p(y|x; \theta)$, there exists a variational distribution $q(z|x)$ such that $q(\perm | x) = p(\perm | x, y;\theta)$.
\end{lemma}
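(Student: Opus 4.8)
The plan is to establish the claim by the standard expectation-maximization argument adapted to this latent-variable setup, showing that the choice $q(\perm\mid x) = p(\perm\mid x, y;\theta)$ makes the ratio inside the logarithm constant in $\perm$, which is precisely the equality condition for Jensen's inequality already noted after Eq.~\eqref{eq:decomp}. Concretely, I would start from the factorization $p(y,\doc_{\perm}\mid x;\theta) = p(\perm\mid x, y;\theta)\, p(y\mid x;\theta)$, which is just the definition of the conditional posterior over the latent permutation. Substituting $q(\perm\mid x) = p(\perm\mid x, y;\theta)$ into the $\elbo$ then gives $\frac{p(y,\doc_{\perm}\mid x;\theta)}{q(\perm\mid x)} = p(y\mid x;\theta)$, which does not depend on $\perm$, so $c = p(y\mid x;\theta)$ serves as the required constant.

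From there the computation is routine: since the ratio is the constant $p(y\mid x;\theta)$, we get
\begin{equation}
\elbo(q,\theta) = \mathbb{E}_{\perm\sim q(\perm\mid x)}\!\left[\log p(y\mid x;\theta)\right] = \log p(y\mid x;\theta)\sum_{\perm} q(\perm\mid x) = \log p(y\mid x;\theta),
\end{equation}
using that $q(\perm\mid x)$ is a normalized distribution. I would also remark on the converse direction for completeness: one can expand $\log p(y\mid x;\theta) - \elbo(q,\theta) = \mathrm{KL}\!\left(q(\perm\mid x)\,\|\,p(\perm\mid x,y;\theta)\right) \ge 0$, so the bound is tight if and only if the KL term vanishes, i.e.\ $q(\perm\mid x) = p(\perm\mid x,y;\theta)$ almost everywhere — this pins down the variational distribution uniquely and justifies the "there exists" phrasing by exhibiting the explicit minimizer.

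I do not anticipate a genuine obstacle here, since this is the textbook EM tightness lemma; the only thing to be careful about is bookkeeping — making sure the conditioning variables and the parameter $\theta$ are carried consistently through the factorization, and that the sum over $\perm$ is treated as a (finite, hence well-defined) sum over the discrete permutation space so that interchanging $\log p(y\mid x;\theta)$ with $\sum_\perm q(\perm\mid x)$ is unproblematic. If I wanted to be thorough I would note that $p(\perm\mid x,y;\theta)$ is well-defined exactly when $p(y\mid x;\theta) > 0$, which holds whenever the model assigns positive probability to the observed answer, so the lemma is stated under the implicit non-degeneracy assumption that $p(y\mid x;\theta)\neq 0$.
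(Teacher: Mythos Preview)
Your proposal is correct and is essentially the same textbook EM tightness argument as the paper, just traversed in the opposite direction: the paper starts from the Jensen equality condition $\frac{p(y,\perm\mid x;\theta)}{q(\perm\mid x)}\equiv c$, sums over $\perm$ to identify $c=p(y\mid x;\theta)$, and solves for $q$; you instead posit $q(\perm\mid x)=p(\perm\mid x,y;\theta)$ and verify directly via the factorization that the ratio equals the constant $p(y\mid x;\theta)$. Your added KL-gap remark and the non-degeneracy caveat $p(y\mid x;\theta)>0$ are nice touches that the paper omits, but the core content is identical.
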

\begin{proof}
    This lemma can be proved by considering the case where the importance weight $\frac{p(y, \perm \mid x; \theta)}{q(\perm \mid x)}$ is constant, denoted as $c$, across all $\perm$. This is formulated as below:
\begin{equation*}
    \begin{aligned}
     &  \frac{p(y, \perm \mid x; \theta)}{q(\perm \mid x)} \equiv c \label{eq:constant1} \quad \forall x, y
    \end{aligned}
\end{equation*}
Then, we can sum both sides over $\perm$ and obtain: $\sum_{\perm} p(y, \perm \mid x; \theta) = c \sum_{\perm} q(\perm|x) \equiv c$.
Here $q(\perm \mid x)$ is a probability distribution over $\perm$, i.e., it sums to 1. Therefore, we solve for $q(\perm \mid x)$ as:
\begin{equation}
    q(\perm \mid x) = \frac{p(y, \perm \mid x; \theta)}{\sum_{\perm} p(y, \perm \mid x; \theta)} = p(\perm \mid x, y; \theta)
\end{equation}
Therefore, the variational distribution $q(z \mid x)$ that matches the true posterior $p(z \mid x, y; \theta)$ achieves the exact $\elbo=\mathbb{E}_{\perm \sim q(\perm \mid x)} \left[ \log \frac{p(y, \doc_{\perm} \mid x; \theta)}{q(\perm \mid x)} \right]$.
\end{proof}


The Lemma~\ref{lemma:elbo} shows an intuitive solution to achieve $\log p(y|x;\theta) \approx \elbo(q, \theta)$ by exactly computing the posterior distribution $p(\perm \mid x,y; \theta)$ for latent document permutation $z$.
However, it is often impractical due to the large-scale permutation space.
To address this challenge, we use an importance sampling strategy, where the $q(\perm|x)$ is directly set to $p(\perm \mid x; \theta_s)$.
Consequently, the training of \ours is achieved within an expectation-maximization principle, including:
\begin{enumerate}[label=(\roman*)]
    \item E-step: \textit{document permutation estimation} (Section~\ref{sec:estimation}) to estimate the distribution of document permutations by sampling from the selection model; and
    \item M-step: \textit{re-weighted maximization} (Section~\ref{sec:maximization}) to jointly optimize the selection model and generator using the importance-weighted samples.
\end{enumerate}
By iteratively alternating these two step, \ours progressively improves the holistic RAG objective $\log p(y|x; \theta)$.

\subsection{E-step: Document permutation estimation}\label{sec:estimation}
\todo{This step aims to estimate the distribution of document permutations.}
Specifically, at the $t$th iteration, we first assume $q(\perm \mid x) \approx p(\perm \mid y, x; \theta^{t})$ and transform the $\elbo(\theta, \theta^{t})$ in Eq.~\eqref{eq:decomp} into:
\begin{equation}\label{eq:elbo-before}
        \begin{aligned}
        &\mathbb{E}_{\perm \sim p(\perm \mid x, y ; \theta^{t})}\left[\log p(y, \perm \mid x; \theta) - \log p(\perm \mid x, y ; \theta^{t})
        \right] \\
        &{}=\mathbb{E}_{\perm \sim p(\perm \mid x, y ; \theta^{t})}\left[\log p(y, \perm \mid x; \theta)\right] +  \mathcal{H}\left(p(\perm | x, y ; \theta^{t}\right),
        \end{aligned}
\end{equation}
where the $\perm$ is ideally sampled from the posterior distribution $p(\perm \mid x, y ; \theta^{t})$ to compute the expectation.
The $\mathcal{H}(p(\perm|x,\theta^{t}))$ indicates the entropy of $p(\perm|x,\theta^{t})$, which is independent to $\theta$ and can be viewed as a constant.
Since the posterior distribution $p(\perm | y, x;\theta)$ is intractable, we alternatively adopt an importance sampling strategy to directly sample the document permutation $\perm$ from our selection model $\theta_s$ via $\perm \sim p(\perm \mid x; \theta_s^{t})$.
To correct the bias introduced by the sampling shifting, we also employ an importance weight to calibrate the expectation.
Formally, it is presented as follows:
\begin{equation}\label{eq:elbo-is}
    \begin{aligned}
        \elbo(\theta, \theta^{t}) &= \mathbb{E}_{\perm \sim p(\perm | x; \theta_s^{t})}\left[w(\perm) \log p(y, \perm | x; \theta) \right] + \text{constant} \\
        w(\perm) &= \frac{p(\perm \mid x, y ; \theta^{t})}{p(\perm \mid x; \theta_s^{t})}.
    \end{aligned}
\end{equation}
Here we denote the $w(\perm)$ as the \textit{importance weight}.
According to Bayes' theorem, where $p(\perm \mid y, x; \theta^{t})  \propto p(\perm \mid x; \theta_s^{t})  \times p(y \mid \doc_{\perm}, x; \theta_g^{t})$, we can then simplify the $w(\perm)$ as:
\begin{equation}
    w(\perm) \propto \frac{ p(\perm \mid x; \theta_{s}^{t})  \times p(y \mid \doc_{\perm}, x; \theta_{g}^{t})}{p(\perm \mid x; \theta_s^{t})} = p(y \mid x, \doc_{\perm}; \theta_g^{t}).
\end{equation}
Intuitively, the weight $w(\perm)$ reflects the utility of a set of documents $\doc_{\perm}$ for the LLM generator $\theta_g$ in generating ground-truth answers.
By sampling from the generative selection model $\theta_s$ ($\perm \sim p(\perm \mid x; \theta_s)$) and calibrating the expectation with importance weights, we directly estimate the distribution of the latent variable $\perm$ (i.e., the document permutation) for unbiased optimization, without explicitly computing the posterior distribution $p(\perm \mid x, y ; \theta^{t})$.

\RestyleAlgo{ruled}
\begin{algorithm}[t]
\KwIn{selection model $\theta_{s}$; LLM generator $\theta_{g}$; training iteration number $N$; training data $\mathcal{T}$. }
\For{$t = 1$ to $N$}{
    \textcolor{dm-blue-500}{\texttt{// Permutation Estimation (E-step)}} \\
    \For{each input query $x$ in training set}{
        Sample document permutations: $\perm \sim p(\perm \mid x; \theta_{s}^{t})$\\
        Compute importance weight: $w(\perm) = p(y \mid x, \perm ; \theta_{g}^{t})$\\
    }
    \textcolor{dm-purple-500}{\texttt{// Re-weighting  (M-step)}} \\
$\mathcal{L_{S}}(x;\theta_{s}) := - \mathbb{E}_{\perm \sim p(\perm \mid x; \theta_{s}^{t})}\left[w(\perm) \log p(\perm \mid x; \theta_{s})\right]$ \\
$\mathcal{L_{G}}(x;\theta_{g}) := - \mathbb{E}_{\perm \sim p(\perm \mid x; \theta_{s}^{t})}\left[w(\perm) \log p(y \mid x, \doc_{\perm}; \theta_{g}) \right] $\\
    \textcolor{dm-purple-500}{\texttt{// Re-weighted Maximization  (M-step)}} \\
$\theta_{s}^{t+1} = \arg \max_{\theta_s} \mathbb{E}_{(x, y) \sim \mathcal{T}} \left[ \mathcal{-L_{S}}(x;\theta_{s}) \right]$ \\
$\theta_{g}^{t+1} = \arg \max_{\theta_{g}} \mathbb{E}_{(x, y) \sim \mathcal{T}}\left[-\mathcal{L_{G}}(x;\theta_{g}) \right] $\\
    \If{no improvement on validation set}{
        Stop training Maximization \textcolor{dm-red-500}{\texttt{// Early Stop}} \\
    }
}
\KwOut{$(\theta_s, \theta_g)$}
\caption{The algorithm for \ours, which alternates between \texttt{Estimation} and \texttt{Maximization} steps to progressively improve the overall RAG performance.}
\label{algo:opt}
\end{algorithm}

\subsection{M-step: Re-weighted maximization}\label{sec:maximization}
After the permutation estimation step, the maximization step aims to update the tunable parameter $\theta$ in the RAG system.
Formally, the optimization objective is defined as: 
\begin{align}
\theta^{t+1} = \arg\max_{\theta} \mathbb{E}_{\perm \sim p(\perm  \mid x; \theta_s^{t})}\left[ w(\perm) \log p(y, \perm \mid x; \theta) \right].
\label{eq:update}
\end{align}
Here, the $\theta = (\theta_s, \theta_g)$ denotes the tunable parameters of knowledge selection model $\theta_s$ and LLM generator $\theta_g$.
Based on Bayes' theorem, we have $p(y, \perm \mid x; \theta) = p(\perm \mid x;\theta_s) \cdot p(y \mid x, \doc_{\perm}; \theta_g)$.
Then, the $\elbo(\theta, \theta^{t})$ can be rewritten as a decomposition of two parts:
\begin{equation}
\mathbb{E}_{\perm  \sim p(\perm  \mid x; \theta_s^{t})}  \bigg[\underbrace{w(\perm) \log p(\perm \mid x; \theta_s)}_{\text{\textit{Learning to select}}} + \underbrace{w(\perm ) \log p(y \mid x, \doc_{\perm}; \theta_g)}_{\text{\textit{Learning to generate}}}\bigg].
\label{eq:two-part}
\end{equation}
Thus, we derive two loss functions, namely (i) selection optimization $\mathcal{L_S}$ and (ii) generation optimization $\mathcal{L_G}$:
\begin{align}
\mathcal{L_S}:= & - \mathbb{E}_{\perm  \sim p(\perm  \mid x; \theta_s^{t})} \left[w(\perm) \log p(\perm \mid x; \theta_s)\right] \label{eq:rank-opt},\\
\mathcal{L_G} := & - \mathbb{E}_{\perm  \sim p(\perm  \mid x; \theta_s^{t})} \left[w(\perm) \log p(y \mid x, \doc_{\perm} ; \theta_g)\right]. \label{eq:gen-opt}
\end{align}

\header{Learning to select}
The function $\mathcal{L_S}$ optimizes the selection model $\theta_s$ for document permutation generation.
Since $\mathcal{L_S}$ is weighted by $w(\perm)$ and $w(\perm) \propto p(y|x, \doc_{\perm}; \theta_s^{t})$, the model $\theta_s$ learns to generate the document permutation that maximizes the end-to-end performance.
Based on Eq.~\eqref{eq:rank-opt}, the gradient of $\mathcal{L_S}$ with respect to $\theta_s$ is:
\begin{equation}
\begin{aligned}
\nabla_{\theta_s} \mathcal{L_S} = - \mathbb{E}_{\perm \sim p(\perm \mid x; \theta_s^t)} \left[ w(\perm) \nabla_{\theta_s} \log p(\perm \mid x; \theta_s) \right].
\label{eq:gradient-rank}
\end{aligned}
\end{equation}

\header{Learning to generate}
The loss function for $\theta_g$ is defined as $\mathcal{L_G} := - \mathbb{E}_{\perm  \sim p(\perm  \mid x; \theta_s^{t})} \left[w(\perm) \log p(y \mid x, \perm ; \theta_g)\right]$, training the LLM generator to understand the selected documents and generate correct answers.
The gradient of $\mathcal{L_{G}}$ with respect to $\theta_g$ can be formulated as:
\begin{equation}
\begin{aligned}
\nabla_{\theta_g} \mathcal{L_G} = - \mathbb{E}_{\perm \sim p(\perm \mid x; \theta_s^t)} \left[ w(\perm) \nabla_{\theta_g} \log p(y \mid x, \doc_{\perm}; \theta_g) \right].
\label{eq:gradient-gen}
\end{aligned}
\end{equation}

\subsection{Upshot: Explanation for \ours with pseudo algorithm}
To provide a more intuitive explanation of the overall optimization process, we present a pseudo algorithm in Algorithm~\ref{algo:opt}.
In \ours, the document permutation distribution is first estimated using the selection model $\theta_s$ (E-step).
Subsequently, the model $\theta_s$ learns to select documents that maximize the generation performance while the generator $\theta_s$, in turn, further learns to leverage the selected documents (M-step).
These two steps are iteratively alternated, enabling the progressive improvement of the holistic RAG system.

\section{Theoretical analysis}\label{sec:theoretical}

To further interpret \ours, we offer a theoretical analysis of its advantages and training stability, and prove its convergence.

\subsection{What do the selector and generator learn?}\label{sec:em-rl}

The format of the optimization gradient in Eq.~\eqref{eq:gradient-rank} and Eq.~\eqref{eq:gradient-gen} is similar to classic policy gradient  approaches~\cite{sutton1999policy,ahmadian2024back,zhao2024policy} used in the reinforcement learning (RL).
Generally, given an input task $x$ and the action $\tau$ generated by the policy model $\pi$, the policy gradient objective $\mathcal{J}(\pi)$ to maximize an reward function $r(\tau)$ can be presented as:
\begin{equation}
\begin{aligned}
    \mathcal{J}(\pi) &= \mathbb{E}_{\tau \sim p(\tau \mid x; \pi) } \left[ r(\tau)\right], \\
   \nabla_{\pi} \mathcal{J}(\pi) &= \mathbb{E}_{\tau \sim p(\tau \mid x; \pi)}\left[ r(\tau) \cdot  \nabla_{\pi} \log p(\tau \mid x;\pi) \right].
   \label{eq:rl}
\end{aligned}
\end{equation}
Below, we make several comparisons to relate RL with our \ours.
\begin{itemize}[leftmargin=*,]
    \item $w(\perm) \iff r(\tau)$: In our learning objectives, $w(\perm)$ plays a role similar to the reward $r(\tau)$ in RL. It represents the importance of the document permutation $\perm$ and serves as a weighting factor to evaluate the utility of documents to downstream tasks, analogous to how rewards $r(\tau)$ shape the policy model in RL.

    \item $p(\perm \mid x; \theta_{s}^{t}) \iff p(\tau \mid x; \pi_{\theta})$: The sampling distribution over permutations $p(\perm \mid x; \theta_{s}^{t})$ reflects the state of the selection model at iteration $t$, similar to how $p(\tau \mid x; \pi_{\theta})$ represents the current policy distribution over trajectories in RL. 

    \item $\log p(y, \perm \mid x; \theta) \iff \log p(\tau \mid x;\pi)$: The $\log p(y, \perm \mid x; \theta)$ in our Eq.~\ref{eq:update} is analogous to the $\log p(\tau \mid x;\pi)$ in RL. In both cases, the gradient is updated to increase the likelihood of actions (or permutations $\perm$) that yield higher rewards (or weights $w(\perm)$).
\end{itemize}

From the RL perspective, the objective functions in Eq.~\eqref{eq:two-part} can be interpreted as a two-agent system collaborating within the RAG task.
In the estimation step, directly sampling a document permutation from the selection model $\theta_s$ is essentially a Monte Carlo process, while $w(\perm) = p(y \mid x, \doc_{\perm}; \theta_g)$ in the maximization step serves as the reward.
Similar to the RL, where policy models improve by reinforcing high-reward actions, the selection model $\theta_s$ improves by selecting documents that lead to better generation. The generator $\theta_g$, in turn, learns to leverage the selected documents to generate correct answers, creating a synergistic loop that optimizes the entire RAG performance.

\subsection{Impact of the importance sampling}\label{sec:is}
In the permutation estimation step, we apply an importance sampling strategy to directly sample document permutations from the knowledge selection model (i.e., $\perm \sim p(\perm \mid x;\theta_s)$) instead of the posterior distribution $p(\perm \mid x, y ; \theta)$.
We analyze its impact on the expectation and variance of the training objective.
For simplicity, we denote $\log p(y, \perm \mid x; \theta^t)$ in the $\elbo$ as a function $f(y,\perm)$.

\header{Expectation is unchanged}
In the permutation estimation step, we employ importance sampling to approximate the posterior distribution $p(\perm \mid x, y; \theta)$ with samples drawn from the proposal distribution $p(\mathbf{z} \mid \mathbf{x}; \theta_s)$, which is parameterized by the selection model.
To compensate for the discrepancy between the target and proposal distributions, we apply an importance weight the weight $w(\perm) = p(y \mid \doc_{\perm}, x; \theta)$ adjusts for the different between the target distribution $p(\perm \mid x; \theta_s^t)$ and the sampling distribution $p(\perm \mid x; \theta_s)$, which can be presented as:
\begin{equation}
    \begin{aligned}
        \mathbb{E}_{p(\perm \mid x, y; \theta^t)} [f(\perm)] &= \mathbb{E}_{p(\perm \mid x; \theta_s^t)} \left[ \frac{p(y \mid x,\doc_{\perm}; \theta^t)}{p(\perm \mid x; \theta_{s}^t)} f(\perm) \right] \\
        &= \mathbb{E}_{p(\perm \mid x; \theta_s^t)} \left[ w(\perm) f(\perm) \right].
    \end{aligned}
\end{equation}
The key property of importance sampling is that it provides an unchanged estimator of the expectation under the true posterior.
This fundamental property of importance sampling ensures that the expectation in the variational optimization objective remains unchanged.
 
\header{Variance decreases}\label{sec:variance}
While importance sampling preserves the expectation, it crucially affects the \textit{variance} of the estimator, which directly impacts the stability and efficiency of gradient-based training. We begin by formulating the \textit{vanilla variance}, i.e., the variance before applying importance sampling.

Before using importance sampling, the \textit{vanilla variance} $\text{Var}_{p(\perm \mid x, y; \theta^t)}\left[ f(y, \perm) \right]$ is formulated as:
\begin{equation}\label{eq:var-old}
\mathbb{E}_{p(\perm \mid x, y; \theta^t)} \left[ f(y,\perm)^2 \right] - \left( \mathbb{E}_{p(\perm \mid x; \theta_s^t)} \left[ f(y,\perm) \right] \right)^2.
\end{equation}
After using importance sampling, we denote the \textit{new variance} as $\text{Var}_{p(\perm \mid x; \theta_{s}^t)}\left[ w(\perm) f(y, \perm) \right]$, which becomes:
\begin{equation}\label{eq:var-new}
\mathbb{E}_{p(\perm \mid x; \theta_s^t)} \left[ w(\perm)^2 f(y,\perm)^2 \right] - \left( \mathbb{E}_{p(\perm \mid x; \theta_s^t)} \left[ w(\perm) f(y,\perm)\right] \right)^2.
\end{equation}
To evaluate the effect of importance sampling on variance, we derive the difference $\Delta \text{Var}$ between the new and old variances. 
\begin{equation}
\begin{aligned}
\Delta\text{Var} &= \text{Var}_{p(\perm \mid x; \theta_{s}^t)}\left[ w(\perm) f(y, \perm) \right] - \text{Var}_{p(\perm \mid x, y; \theta^t)}\left[ f(y, \perm) \right] \\
& = \mathbb{E}_{p(\perm \mid x,y;\theta^{t})} \left[f(y, \perm)^2 \left(  w(\perm) -1\right) \right].
\end{aligned}
\label{eq:var-comparison}
\end{equation}
Since $w(\perm) \propto  p(y \mid x, \doc_{\perm}; \theta_g^{t})$ and the $0 \leq p(y \mid x, \doc_{\perm}; \theta_g^{t}) \leq 1$, it follows that $\Delta\text{Var} \leq 0$ if  $w(\perm)$ is normalized.
This expression reveals a key insight: if $w(\perm) \leq 1$ for all $\perm$, then $\Delta \text{Var} \leq 0$, indicating that the variance of the importance-weighted estimator is strictly \textit{lower} than the original variance.
o further guarantee numerical stability and prevent rare outlier samples with disproportionately large weights, we apply normalization to $w(\perm)$ in practice:
\begin{equation}
    w(\perm) = \frac{w(\perm)}{\sum_{\perm'} w(\perm')}.
\end{equation}
This normalization step ensures bounded gradients and facilitates smoother convergence throughout the DRO training process.

\header{Training process gradually stabilizes}\label{sec:stability}
From Eq.~\eqref{eq:var-old} and Eq.~\eqref{eq:var-new}, we notice that the $w(\perm)$ plays a key role in shaping the variance, which affects the training stability.
Since $w(\perm) \propto  p(y \mid x, \doc_{\perm}; \theta_g^{t})$, we have the following analysis to demonstrate the training stability of our \ours:
\begin{enumerate*}[label=(\roman*)]
    \item initially, the generator $\theta_g^t$ is less optimized, which potentially results in high variance in $p(y \mid x, \perm; \theta_g^t)$; 
    \item as generator $\theta_g^t$ is optimized through the loss function in Eq.~\eqref{eq:gen-opt}, $p(y \mid x, \perm; \theta_g^t)$ stabilizes, which leads to a progressively reduced variance.
\end{enumerate*}

\subsection{Convergence of the optimization process}\label{sec:convergence}

To prove the convergence of \ours, we show the non-decreasing and upper-bounded property of  $\log p(y | x, \theta^{t})$ during the training. 

We first prove that the log-likelihood $\log p(y \mid x, \theta^{t+1})$ is non-decreasing after each training iteration, which is formulated as: $\log p(y\mid x, \theta^{t+1}) \geq \log p(y | x, \theta^{t})$. Here $\theta = \left(\theta_{s}, \theta_{g}\right)$ consists of the parameters of the selection model $\theta_s$ and the LLM generator $\theta_g$.

\begin{proof}
At each iteration $t$, we start by estimating the distribution of the latent variable $\perm$, i.e., document permutation.
Since we apply an importance sampling strategy to directly sample $\perm$ from the selection model $\theta_{s}^{t}$, the $\elbo$ is transformed as in Eq.~\eqref{eq:elbo-is}:
\begin{equation*}
    \begin{aligned}
        \elbo(\theta, \theta^{t}) &:= \mathbb{E}_{\perm \sim p(\perm \mid x;\theta_{s}^{t})}\left[w(\perm) \log p(y, \perm \mid x ; \theta)\right].
    \end{aligned}
\end{equation*}
In the maximization step, we update $\theta$ by maximizing the ELBO: $\theta^{t+1} = \arg \max_{\theta} \elbo(\theta, \theta^{t})$. This ensures that $ \elbo(\theta^{t+1}, \theta^{t}) \geq \elbo(\theta^{t}, \theta^{t})$.
We further observe that the marginal log-likelihood in Eq.~\ref{eq:decomp} satisfies:
\begin{equation}
    \begin{aligned}
        \log p(y \mid x; \theta^{t+1}) & \geq \elbo(\theta^{t+1}, \theta^{t}) \\
       & \geq \elbo(\theta^{t}, \theta^{t}) = \log p(y \mid x; \theta^{t}).
    \end{aligned}
\end{equation}
Thus, we establish that $\log p(y \mid x; \theta^{t+1}) \geq \log p(y \mid x; \theta^{t})$, demonstrating the non-decreasing nature of the optimization process.
Next, we examine the boundedness of the log-likelihood. Given that $0 \leq p(y \mid x; \theta) \leq 1$, it follows that $-\infty \leq \log p(y \mid x; \theta) \leq 0$.
Then, we introduce an existing theorem from~\cite{Bibby_1974} as follows.
\begin{theorem}
\label{thm:converge}
\textbf{Monotone Convergence Theorem:} If a sequence $\{a_n\}$ is monotonic (either non-decreasing or non-increasing) and bounded, then it converges to a finite limit.
\end{theorem}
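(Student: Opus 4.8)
The plan is to prove Theorem~\ref{thm:converge} from the order-completeness (least-upper-bound property) of $\mathbb{R}$, and then to close the convergence argument for \ours by applying it to the sequence $a_t = \log p(y \mid x; \theta^{t})$.

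First I would reduce to a single case. If $\{a_n\}$ is non-increasing and bounded below, then $\{-a_n\}$ is non-decreasing and bounded above, so it suffices to handle the non-decreasing, bounded-above case and transfer the conclusion back by negation. Next, given a non-decreasing sequence $\{a_n\}$ bounded above, I would let $L = \sup_n a_n$, which exists and is finite by completeness of $\mathbb{R}$, and show $a_n \to L$: for any $\varepsilon > 0$, since $L - \varepsilon$ is not an upper bound of $\{a_n\}$ there is an index $N$ with $a_N > L - \varepsilon$; monotonicity then gives $L - \varepsilon < a_N \le a_n \le L < L + \varepsilon$ for all $n \ge N$, so $|a_n - L| < \varepsilon$ eventually, i.e.\ $\lim_n a_n = L$ is a finite limit.

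Finally I would return to \ours: the preceding part of the proof already established $\log p(y\mid x;\theta^{t+1}) \ge \log p(y\mid x;\theta^{t})$ (non-decreasing) and $\log p(y\mid x;\theta^{t}) \le 0$ (bounded above), so Theorem~\ref{thm:converge} applied to $a_t = \log p(y\mid x;\theta^{t})$ yields a finite limit $L^\star = \lim_{t\to\infty}\log p(y\mid x;\theta^{t}) \le 0$. Since each per-example log-likelihood converges monotonically, this lifts to the training objective $\mathbb{E}_{(x,y)\sim\mathcal{T}}[\log p(y\mid x;\theta^{t})]$ as well, establishing convergence of \ours.

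The main obstacle here is not depth but foundations: the result is classical and the argument is short, and its only genuine content is recognizing that boundedness together with the supremum axiom of $\mathbb{R}$ is exactly what supplies the candidate limit. The only places requiring care are the $\varepsilon$-estimate tying $\sup_n a_n$ to the limit and the sign bookkeeping in the reduction from the non-increasing to the non-decreasing case.
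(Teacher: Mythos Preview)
Your proof of the Monotone Convergence Theorem is correct and is the standard argument via the least-upper-bound property. However, the paper does not actually prove this theorem at all: it simply introduces it as an existing result from \cite{Bibby_1974} and immediately applies it as a black box to the sequence $\{\log p(y\mid x;\theta^{t})\}$. So you have supplied strictly more than the paper does --- a full self-contained proof where the paper only cites --- and your application to \ours matches the paper's use of the theorem. Your extra remark about lifting convergence to the expectation $\mathbb{E}_{(x,y)\sim\mathcal{T}}[\log p(y\mid x;\theta^{t})]$ is also correct (the expectation inherits monotonicity and boundedness from the per-example sequences) but goes beyond what the paper claims.
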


\noindent%
Applying Theorem~\ref{thm:converge}, the non-decreasing and upper-bounded nature of $\{\log p(y \mid x; \theta^t)\}_{t=1}^\infty$ ensures that the sequence converges to a finite limit, proving the convergence of the \ours training.
In practice, when the performance on the validation set shows no further improvement, the model is considered to have converged.
\end{proof}

\subsection{Upshot}

Below, we summarize three key insights from the above theoretical analysis:

\begin{enumerate}[label=(\roman*)]
    \item \todo{The objective of \ours can be interpreted as optimizing a collaborative two-agent system within the RAG framework. The selection model $\theta_s$ improves by identifying document permutations that enhance the generator's performance. In turn, the generator $\theta_g$ learns to better utilize the selected documents to produce correct answers. This mutual improvement forms a synergistic loop that jointly optimizes the overall RAG performance.}

    \item \todo{The use of importance sampling in \ours preserves the expectation of the learning objective while reducing variance, particularly when the importance weights are normalized. This contributes to more stable and efficient training.}

    \item \todo{The convergence of \ours is theoretically guaranteed by the monotonic increase of the log-likelihood and its upper boundedness. In practice, this implies that the model either improves or maintains its performance over iterations, and training can be safely terminated when validation performance saturates, ensuring both convergence and training efficiency.}
\end{enumerate}

\section{Experimental setup}

\subsection{Datasets}

Following previous work~\citep{Lewis2020RetrievalAugmentedGF, li2024corpuslm, zamani2024stochastic, yu2024rankrag}, we conduct experiments on \textit{full development set} of five commonly used question-answering benchmarks: Natural Question (NQ)~\cite{kwiatkowski-etal-2019-natural},
HotpotQA~\citep{yang2018hotpotqa},  MuSiQue~\citep{trivedi2021musique}, 2Wikimultihopqa (2WikiQA)~\citep{xanh2020_2wikimultihop}, and 
Wizard-of-Wikipedia (WoW) \citep{dinan2018wizard}.
Table~\ref{tab:dataset} presents the statistics of these datasets.

\subsection{Evaluation metrics}
In line with previous studies~\citep{li2024rag, wei2024instructrag, shi2024generate}, we use \textit{F1}  and \textit{Exactly Match} (EM) metrics from KILT~\cite{petroni-etal-2021-kilt} for evaluation.
The \textit{F1} score is used to measure the token-level 
overlap between the generated answer and the ground truth answer, which represents the harmonic mean of precision and recall, where the recall is determined by considering the number of overlaps with the correct answer tokens, while precision is determined by considering the number of overlaps with all generated tokens.
The \textit{EM} metric checks if the predicted strings exactly match the ground truth.
Besides the end-to-ene evaluation, we also use the \textit{Recall@K} (\textit{K}=1,3,5) to evaluate the document re-ranking performance of our selection model following previous work~\cite{erqrag, petroni-etal-2021-kilt, li2024corpuslm}, in which the Recall@K is set to 1 if and only if the top-K documents contains the ground-truth answer.

\subsection{Baselines}\label{sec:baseline}
We compare the proposed \ours with four categories of baselines \todo{based on how they integrate the knowledge selection process with the answer generation process.}

\header{Prompting-based methods}
These methods guide LLMs to leverage external knowledge through prompt learning.
We evaluate:
\begin{enumerate}[label=(\roman*)]
    \item \textit{RetGen}~\cite{shao2023enhancing}, which interleaves query formulation, retrieval, and answer generation iteratively;
    \item \textit{GenGround}~\cite{shi2024generate}, which instructs a LLM to generate answers and then revise them using retrieved documents;
    \item \textit{In-context RAG}~\cite{icralm}, which truncates the top k retrieved documents as context for the LLM to generate answers.
\end{enumerate}
For the first two baselines, we use GPT-3.5-turbo as the backbone, following their official implementations.
For \textit{in-context RAG}, we evaluate various LLMs with the same three in-context examples.

\header{Retrieval Tuning}
These methods improve the entire RAG performance by enhancing the retrieval process. 
We evaluate: 
\begin{enumerate}[label=(\roman*)]
    \item \textit{REPLUG}~\cite{shi2023replug} and \textit{DPA-RAG}~\cite{dong2024understand}, which adapt a dense retriever or point-wise ranker to a frozen LLMs by re-scoring the relevance;
    \item \textit{FLICO}~\cite{wang2023learning} and \textit{RECOMP}, which filter irrelevant content from retrieved documents; and
    \item \textit{Re-ranking Models}, which re-rank retrieved documents, passing the top-\textit{K} ranked results for LLMs.
\end{enumerate}
We benchmark point-wise (MonoT5~\cite{nogueira2020document} and BGE~\cite{bge_embedding}) and list-wise rankers ( RankVicuna~\cite{pradeep2023rankvicuna} and RankZephyr~\cite{pradeep2023rankzephyr}).

\header{LLM Fine-tuning}
These methods fine-tune LLMs through instruction tuning, improving LLMs' ability to utilize retrieved documents.
We evaluate: 
\begin{enumerate}[label=(\roman*)]
    \item Vanilla supervised fine-tuning (SFT),  which train LLMs by maximizing the answer likelihood based on query and documents;
    \item  \textit{Chat\-QA}~\cite{liu2024chatqa} and \textit{RankRAG}~\cite{yu2024rankrag}, which post-train LLMs on diverse knowledge-grounded tasks;
    \item  \textit{RetRobust}~\cite{yoran2023making}, \textit{InstructRAG}~\cite{wei2024instructrag}, \textit{Self-RAG}~\cite{asai2023self} and \textit{RAAT-7B}~\cite{fang2024enhancing}, which train LLMs to identify relevant content from retrieved documents for QA.
\end{enumerate}

\header{End-to-end training}
These methods train RAG components with an end-to-end objective.
We benchmark:
\begin{enumerate}[label=(\roman*)]
    \item \textit{Atlas}~\cite{izacard2023atlas}, a pre-trained retrieval-augmented LLM;
    \item  \textit{RA-DIT}~\cite{lin2023ra}, which initially trains a generator and subsequently fine-tune a dual-encoder retriever; and
    \item \textit{DDR-RAG}~\cite{li2024rag}, which employs DPO~\cite{rafailov2024direct} (Direct Preference Optimization) to jointly train a point-wise ranker and an answer generator.
\end{enumerate}

To ensure fairness,  we set the size of retrieval documents as 20 for all the baselines and the $K=5$, aligning with the implementation of \ours.
Since the code or model checkpoints are unavailable for some baselines, we mark their incomplete results as ``--''.
In such cases, we report results from the original papers but only for reference.

\begin{table}[!t]
\caption{Statistics of our experimental datasets, where \todo{we provide the amount of training and evaluation dataset, the average length of input query (word) as well as the retrieval corpus.}}
\label{tab:dataset}
\centering
\begin{adjustbox}{width=\columnwidth,center}
\setlength\tabcolsep{10pt}
\begin{tabular}{l cc cc c}
\toprule
\begin{tabular}[c]{@{}c@{}} \textbf{Experimental} \\ \textbf{Benchmarks} \end{tabular}
& \begin{tabular}[c]{@{}c@{}} \textbf{Training} \\ \textbf{Data Size} \end{tabular} 
& \begin{tabular}[c]{@{}c@{}} \textbf{Query Length} \\ \textbf{(Train)} \end{tabular} 
& \begin{tabular}[c]{@{}c@{}} \textbf{Evaluation} \\ \textbf{Data Size} \end{tabular} 
& \begin{tabular}[c]{@{}c@{}} \textbf{Query Length} \\ \textbf{(Evaluation)} \end{tabular} 
& \begin{tabular}[c]{@{}c@{}} \textbf{Retrieval} \\ \textbf{Corpus} \end{tabular}
\\
\midrule
Nature Question~\cite{kwiatkowski-etal-2019-natural}
& \phantom{0}58,622 & \phantom{0}9.21
& \phantom{0}6,489  & \phantom{0}9.16
& Wiki2018 
\\

Hotpot QA~\cite{yang2018hotpotqa}
& \phantom{0}90,185 & 17.85
& \phantom{0}7,384  & 15.63 
& Wiki2018
\\

MusiQue QA~\cite{trivedi2021musique} 
& \phantom{0}19,938 & 15.96
& \phantom{0}2,417  & 18.11
& Wiki2018
\\
 
2WikiMultiHopQA~\cite{xanh2020_2wikimultihop} 
& 167,454 & 12.74
& 12,576  & 11.97
& Wiki2018
\\

Wizard-of-Wikipedia~\cite{dinan2018wizard}
& \phantom{0}63,734  & 70.41
&  \phantom{0}3,054  & 70.25
& Wiki2018
\\
\bottomrule
\end{tabular}
\end{adjustbox}
\end{table}

\subsection{Implementation details}\label{sec:implement}
We use the Llama-3-8B~\cite{dubey2024llama} as the backbone model for both our ranker and LM generator.
We also alternate it with another model, i.e., Mistral-7B~\cite{mistral}, to evaluate the generalizability of \ours across different backbones.
Given a query $x$, we initially use the off-the-shelf ColBERTv2.0~\cite{colbertv2} to retrieve top $20$ documents as input for the selection model $\theta_s$, which  selects a permutation containing $K=5$ documents to the generator $\theta_g$.
Following previous work~\cite{xu2023searchinthechain, li2024rag, karpukhin2020dense}, the document corpus for the retrieval is based on the Wikipedia passage dump from Dec. 20, 2018.
We set the maximum training iteration $N=5$ (\S~\ref{sec:maximization}) and report the performance for each iteration checkpoint for a comprehensive analysis.
The sampling number for document permutation is set to 8.
We use the DeepSpeed ZeRO strategy~\cite{deepspeed} during training, with learning rates of $1e^{-5}$ and a weight decay coefficient of 0.01.
The training of the \ours can be done within 20 hours with 8 NVIDIA A100-PCIE-80GB GPUs.

\begin{table*}[!t]
\centering
\small
\caption{Experimental results on five benchmarks, where we highlight the best results in \underline{bold}. We also compute the average EM or F1 across all datasets. \textit{Scale} indicates the parameter size of the LLM generator. $^*$ denotes the baselines based on closed-source \textit{gpt-3.5}.
$^\dagger$ and $^\ddagger$ indicate significant improvements over best open-source baselines with p-value < 0.05 and 0.01.}
\label{tab:main}

\setlength\tabcolsep{2.0pt}
\begin{adjustbox}{width=\columnwidth,center}
\begin{tabular}{@{}p{3.7cm} r cc cc cc cc cc cc@{}}
\toprule
\textbf{\textcolor{blue}{Tasks}}&
& \multicolumn{2}{c}{\textbf{NQ}} 
& \multicolumn{2}{c}{\textbf{HotpotQA}} 
& \multicolumn{2}{c}{\textbf{MuSiQue}} 
& \multicolumn{2}{c}{\textbf{2WikiQA}} 
& \multicolumn{1}{c}{\textbf{WoW}}
& \multicolumn{2}{c}{\textbf{Avg.}}\\
\midrule
\textbf{Metrics}  & \textbf{Scale} 
& \textbf{EM} & \textbf{F1}
& \textbf{EM} & \textbf{F1}
& \textbf{EM} & \textbf{F1}
& \textbf{EM} & \textbf{F1}
& \textbf{F1}
& \textbf{EM} & \textbf{F1} \\

\hline
\rowcolor{Gainsboro}
\multicolumn{13}{l}{\textit{Prompting-based Method}} \\

RetGen$^*$~\citep{shao2023enhancing} & 175B
 & 37.75  & 39.78 & 39.04 &  42.10  &17.69  &21.04   &  33.00   & 39.17 & 16.97
& 31.87  & 31.81
\\

GenGround$^*$~\citep{shi2024generate} & 175B 
& 40.60 & 42.31   & 41.27  & 44.71 & 20.77 & 24.36  &  39.61 & 42.58 & 17.76
& 35.56 & 34.34
\\

In-context RAG~\cite{icralm} \\
\textit{ - w/} Llama3-inst.-70B~\cite{dubey2024llama} & 70B
 &  39.38 & 47.26 &  37.38  &  39.62 & 16.43  & 21.16 &  37.26 & 41.46 & 17.06
 &  32.61 & 33.31
\\ 

\textit{ - w/} Llama2-70B-chat~\cite{touvron2023llama}& 70B 
& 38.07 & 44.69 & 37.14 &40.27 & 16.78 & 20.11 & 38.51 &  41.02 &15.75 
& 32.62 & 32.37
\\ 

\textit{ - w/} Mixtral-inst.-8x7B~\cite{jiang2024mixtral} & 56B 
& 39.34 & 46.34 & 39.63 & 42.53 & 16.65 & 20.73&  37.03 & 38.50 & 16.66
& 33.16 & 32.95
\\

\textit{ - w/} Llama2-13B-chat~\cite{touvron2023llama} &  13B 
& 35.27 & 41.54 & 35.43 & 40.37 & 16.87 & 18.37 & 35.47 &    38.63 &   13.12 
& 30.76 & 30.41
\\

\hline
\rowcolor{Gainsboro}
\multicolumn{13}{l}{\textit{Retrieval tuning}} \\

REPLUG~\cite{xu2023recomp}
& 65B & 28.80 &  -- & 32.00 &  --& -- &  --& -- &  -- & --
& -- & --
\\

RECOMP~\cite{xu2023recomp} & 20B
&  37.47 & 42.67 & 38.72 & 42.72 & 17.34 &  24.96 &  32.17   &  38.26 & 15.11
& 31.43 & 32.74
\\

DPA-RAG~\cite{dong2024understand} & 8B 
& 37.29   & 44.31 &  37.15 &  40.53 & 18.45 & 20.36  & 39.02 & 39.66 & 14.73
& 32.98 & 31.92
\\

FLICO~\cite{wang2023learning}& 7B
& 35.32  & 37.24 & 38.74 & 39.21 & 14.53 &  16.81 & 29.74 & 33.05 & 14.23
& 29.58 & 28.11
\\

MonoT5~\citep{nogueira2020document} \textit{w/} Mistral$\dagger$ & 7B
&  31.78 &  37.68 &  32.38   & 38.76 & 14.31 &   19.16 & 35.96 & 37.11 & 14.27
& 28.61 & 29.40
\\

RankVicuna~\cite{pradeep2023rankvicuna} \textit{w/} Mistral$\dagger$ &  7B 
& 33.78 & 39.58& 34.74 & 41.25 &15.38 &  21.87 & 36.72 &  38.93 & 14.45
& 30.16 & 31.22
\\

RankZephyr~\cite{pradeep2023rankzephyr} \textit{w/} Mistral$\dagger$ &  7B
& 34.77 &  45.22 & 36.08 & 42.77 & 16.03  &   21.75 & 35.07  & 38.31  & 14.94
& 30.49 & 32.60
\\

BGE-ranker~\citep{bge_embedding} \textit{w/} Mistral$\dagger$ &  7B
& 35.21 & 40.50 & 37.61 & 38.10 & 16.61 &  21.37 & 37.16 & 38.49 & 14.77
& 31.65 & 30.65
\\

\hline
\rowcolor{Gainsboro}
\multicolumn{13}{l}{\textit{LLM Fine-tuning}} \\

RetRobust~\cite{yoran2023making} & 13B 
&  37.03 & 43.82 & 35.59  & 40.54  & 18.11  &  18.16   & 38.65  &  39.11  & 17.04
& 32.34 & 31.73
\\

ChatQA~\cite{xu2023recomp} &  8B
 & 23.64 & 34.54& 33.40 & 44.60 & 16.64 & 17.05  & 26.80 & 31.90 & 16.22
&25.12 & 28.86
\\

RankRAG~\cite{xu2023recomp} & 8B 
& -- & -- & 35.30  & 46.70 & -- &  -- & 31.40 & 36.90 & -- 
& -- & --
\\

InstructRAG~\cite{wei2024instructrag}  & 8B 
&  35.90  &   38.21  &  30.69 & 34.71 & 14.94 &  25.88 & 35.92  & 20.01  & 14.57
&  29.36 & 26.68
\\

Vanilla SFT \textit{w/} Llama3~\cite{touvron2023llama}  & 8B
& 35.25 & 38.46 & 25.07 & 32.57 & 14.35 & 17.82  & 30.65 & 30.43  & 13.91
& 26.33 & 26.64
\\

Vanilla SFT \textit{w/} Mistral~\cite{mistral} & 7B
& 34.65 &37.52 & 25.75 & 30.65 & 13.35 & 17.97  & 30.43  &  30.65   & 13.83
& 26.05 & 26.12
\\

Self-RAG ~\cite{asai2023self} &  7B 
& 29.74 & 31.63   & 16.30 & 27.30 & 9.43  & 21.50 &23.52  & 27.33&13.24
&19.75 & 24.2
\\

RAAT~\cite{fang2024enhancing}& 7B
& 33.53  & 37.85 &  33.01 & 31.67 & 17.08 &  21.793 &  29.69 & 32.68  & 15.37
& 28.33 & 27.87
\\

\hline
\rowcolor{Gainsboro}
\multicolumn{13}{l}{\textit{End-to-end optimization}} \\

RA-DIT~\cite{lin2023ra}
&  65B & 35.20 & -- & 39.70  & --  & -- &  -- & -- & -- & -- 
& --  & --
\\

Atlas~\cite{izacard2023atlas}
& 11B & 26.70 & -- & 34.70 & -- & -- & -- & -- & --& --
& -- & --
\\

DDR-RAG~\cite{li2024rag} &  8B 
&  40.74  & 28.76 &  31.71 & 40.04 & 13.54 & 10.57 &35.44 & 38.40 & 16.21
& 30.36 & 26.80
\\

\rowcolor{green!12}\ours-Mistral-7B (\textbf{Ours}) & 7B 
& 42.41 & 51.01 & 40.37 & 47.87  & \underline{\textbf{21.36}} & 25.32 & \underline{\textbf{42.12}}  & 43.65& 18.31
& 36.56 & 37.23
\\

\rowcolor{green!12}\ours-Llama-3-8B (\textbf{Ours}) & 8B
& \textbf{\underline{45.76\rlap{$^\ddagger$}}}  &	\textbf{\underline{55.42\rlap{$^\ddagger$}}} & \textbf{\underline{42.23\rlap{$^\dagger$}}} & \textbf{\underline{49.27\rlap{$^\ddagger$}}}  &  20.64\rlap{$^\dagger$} &  \textbf{\underline{25.97\rlap{$^\dagger$}}} & 40.12\rlap{$^\dagger$} & \textbf{\underline{44.12\rlap{$^\ddagger$}}} & \textbf{\underline{18.76\rlap{$^\ddagger$}}}
& \textbf{\underline{37.19\rlap{$^\ddagger$}}} & \textbf{\underline{38.71\rlap{$^\ddagger$}}}
\\

\bottomrule
\end{tabular}
\end{adjustbox}
\end{table*}

\section{Experimental results}\label{sec:experiment-results}

\subsection{Overall performance}

\header{Single-hop QA benchmark}
We first analyze the performance of the proposed \ours on open-domain QA and dialogue datasets, including NQ and WoW.
The inputs in these datasets are complex, posing challenges for neural models to comprehend~\cite{kwiatkowski-etal-2019-natural}.
Table~\ref{tab:main} presents the results, where we observe that our \ours achieves the best performance, such as pushing the EM from 40.74 to 45.76 in NQ dataset (12.32\% relative improvements).
To investigate the reasons for our superior performance, we conduct a coarse-to-fine analysis.
We begin by identifying some evaluation cases that incorrectly answered by strong baselines (e.g., InstructRAG) while correctly answered by our method.
Then we compare the outputs from both approaches for these cases.
The selection model, trained collaboratively with the LLM generator in \ours, selects more useful documents for answer generation.
This demonstrates that our selection model learns to understand utility-oriented document selection criteria in RAG tasks, contributing to holistic improvement.

\begin{figure}[!t]
    \centering
	\includegraphics[width=0.95\linewidth]{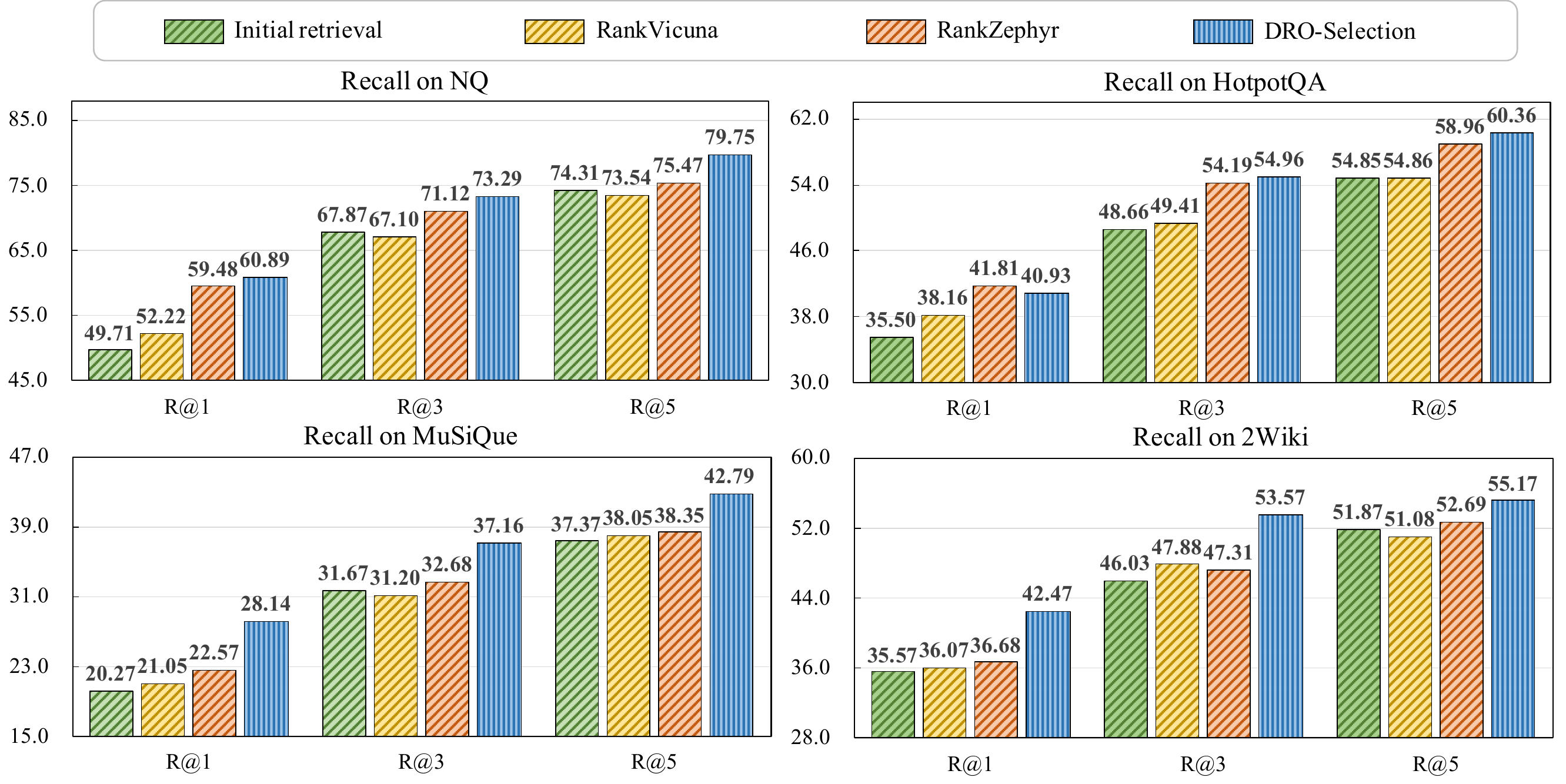}
    \caption{Recall@\textit{K} (k=1, 3, 5) score of the initial retrieval (Colbertv2.0), two re-ranking baselines (i.e., RankVicuna and RankZephyer) and our selection model $\theta_{s}$, respectively.}
    \label{fig:recall}
\end{figure}

\header{Multi-hop QA benchmark}
We further validate the superiority of the proposed method on multi-hop QA tasks, which are more challenging since they require the model to aggregate evidence from multiple documents to derive the answer.
For example, based on our statistic, queries in HotpotQA involve at least 2.21 hops, while those in MuSiQue require an average of 2.65 hops.
As shown in Table~\ref{tab:main}, our \ours substantially outperforms existing baselines, such as achieving a F1 score of 49.27 and an EM score of 42.23  in HotpotQA dataset.
The reasons for this improvement are: 
(1) the weight $w(\perm)$ in the selection model's learning objective severs as a reward function, reinforcing selecting documents that maximize end-to-end generation performance, and 
(2) the LLM generator is optimized synchronously, further enhancing overall effectiveness.

\header{Document selection evaluation}
In addition to the end-to-end evaluation presented in Table~\ref{tab:main}, we further evaluate the Recall@\textit{K} of our selection models on documents re-ranking task.
Following~\cite{karpukhin2020dense, kwiatkowski-etal-2019-natural}, we set the Recall@K to 1 if the top-K documents contains the ground-truth answer.
As shown in Figure~\ref{fig:recall}, our selection model improves the recall of the initial retrieval of ColBERTv2.0 by on average 17.78\%, such as pushing the accuracy@1 from 49.71 to 60.89 on NQ dataset.
We also compared with the similar list-wise re-ranking baselines, where we find that the selection model, trained within \ours, achieves the highest recall.
This validates the need to enable end-to-end supervision for knowled    ge selection process.

\begin{table}[htbp]
\centering
\caption{F1 score for checkpoints in each training iteration.}
\begin{adjustbox}{width=0.85\columnwidth,center}
\setlength\tabcolsep{14pt}
\begin{tabular}{@{}c cccc c@{}}
\toprule
\textbf{Iteration} & \textbf{NQ} & \textbf{HotpotQA} & \textbf{MuSiQue} &\textbf{2WikiQA}  & \textbf{Average $\Delta$ }
\\ 

\hline
\rowcolor{Gainsboro}\multicolumn{6}{l}{\textit{\ours-Mistral-7B}} \\
1 &  41.56  &  39.65  &  20.79  &  36.5 & -- \\
2  &  47.19$_{\uparrow 11.9\%}$  &  43.71$_{\uparrow 9.3\%}$  &  23.26$_{\uparrow 10.6\%}$  &  39.53$_{\uparrow 7.7\%}$ & 9.8\% \\
3  &  49.98$_{\uparrow 5.6\%}$  &  46.03$_{\uparrow 5.0\%}$  &  24.87$_{\uparrow 6.9\%}$  &  41.54$_{\uparrow 4.8\%}$  & 5.5\% \\
4  &  50.82$_{\uparrow 1.7\%}$  &  47.24$_{\uparrow 2.6\%}$  &  25.07$_{\uparrow 0.8\%}$  &  41.75$_{\uparrow 0.5\%}$ & 1.4\% \\
5  &  50.97$_{\uparrow 0.3\%}$  &  47.87$_{\uparrow 1.3\%}$  &  25.32$_{\uparrow 1.0\%}$  &  42.12$_{\uparrow 0.9\%}$ & 0.9\%\\

\hline
\rowcolor{Gainsboro}\multicolumn{6}{l}{\textit{\ours-Llama-3-8B}} \\
1  &  44.84  &  41.07  &  21.43  &  38.43 & -- \\
2  &  49.11$_{\uparrow 8.7\%}$  &  44.73$_{\uparrow 8.2\%}$  &  24.06$_{\uparrow 10.9\%}$  &  41.27$_{\uparrow 6.9\%}$ & 8.7\% \\
3  &  54.98$_{\uparrow 10.7\%}$  &  48.24$_{\uparrow 7.3\%}$  &  25.24$_{\uparrow 4.7\%}$  &  43.02$_{\uparrow 4.1\%}$ &  6.7\% \\
4  &  55.01$_{\uparrow 0.1\%}$  &  49.06$_{\uparrow 1.7\%}$  &  25.94$_{\uparrow 2.7\%}$  &  43.65$_{\uparrow 1.4\%}$ &  1.5\% \\
5  &  55.42$_{\uparrow 0.7\%}$  &  49.27$_{\uparrow 0.4\%}$  &  25.97$_{\uparrow 0.1\%}$  &  44.12$_{\uparrow 1.1\%}$ &  0.6\% \\
\bottomrule
\end{tabular}
\label{tab:converge}
\end{adjustbox}
\end{table}

\subsection{Training Convergence}

As illustrated in Algorithm~\ref{algo:opt}, \ours iteratively optimizes the model parameters $\theta = (\theta_s, \theta_g)$ to improve performance. To analyze convergence, we evaluate the model checkpoint at each training iteration.
Table~\ref{tab:converge} presents the results on four experimental results.
For both \ours-Mistral-7B and \ours-LLaMA-3-8B, we observe a consistent and substantial increase in F1 scores during the first three iterations. 
From iteration 3 to 5, the gains begin to taper off, indicating the model has entered a convergence phase. Notably, \ours outperforms competitive baselines such as BGE and RankZephyr after just two iterations, demonstrating the promising performance of our method.

\subsection{Training Stability}
In our maximization step, we employ an importance weight $w(\perm)$ to calibrate the optimization expectation.
Our theoretical analysis in Section~\ref{sec:variance} shows that the training variance decreases as training progresses, since $w(\perm) \propto p(y \mid x, \doc_{\perm}; \theta_g)$.
To validate this finding, we compute the variance of sampling from $p(y \mid x; \doc_{\perm})$ at each training iteration.
We vary the number of samples from 1 to 16 to compute the variance, respectively.
See Figure~\ref{fig:variance}.
The variance substantially decreases with the optimization of model $\theta_g$ during training progress, validating the correctness of our theoretical analysis from Section~\ref{sec:stability}.
\todo{We also find that increasing the number of samples per iteration can reduces variance at each training step, which is an straightforward solution to improve the training stability.
Besides, in our experiments, we observe that increasing the number of samples per iteration reduces variance at each training step, offering a straightforward strategy to improve training robustness, especially during early-stage training (See Section~\ref{sec:sample-number} from more details).}

\begin{figure}[!t]
    \centering
	\includegraphics[width=\linewidth]{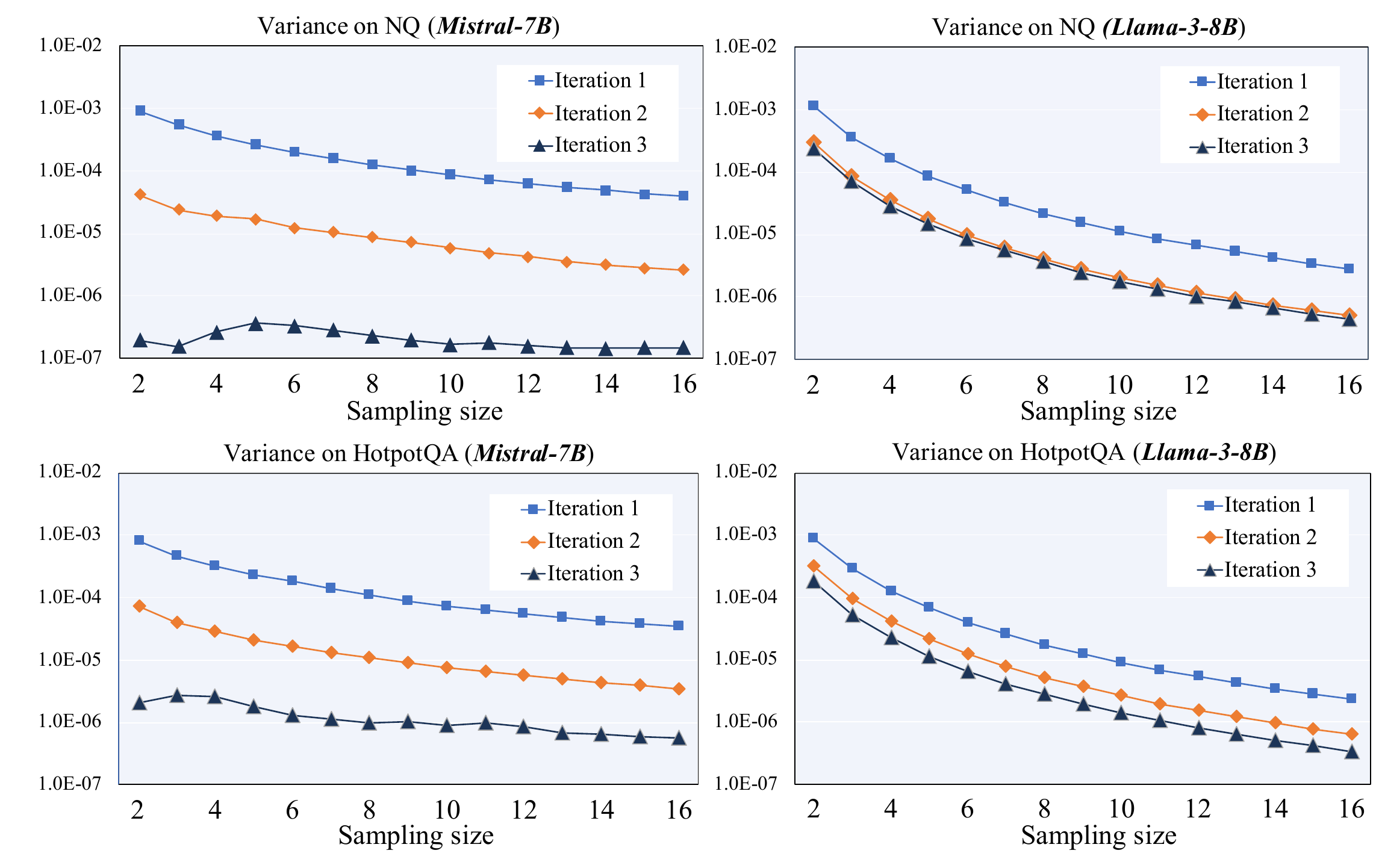}
    \caption{Variance during the training process of our method (logarithmic scale).}
    \label{fig:variance}
\end{figure}

\begin{remark}
\todo{The reduction of variance during training is a direct consequence of the improved confidence of the generator $\theta_g$ in predicting correct answers.
As $p(y \mid x, \doc_{\perm}; \theta_g)$ stabilizes, the importance weight $w(\perm)$ becomes less volatile, leading to lower variance and enhanced training stability. This validates the use of importance weighting in \ours as not only theoretically sound but also practically stabilizing.}
\end{remark}

\subsection{Ablation studies}

The proposed \ours method enables end-to-end optimization of two tightly coupled components: the selection model $\theta_s$ and the LLM generator $\theta_g$. To quantify the individual contributions of these components, we conduct an ablation study by isolating the training of each module while freezing the other.
Formally, we compare the full \ours method against two ablated variants:
\begin{enumerate}[label=(\roman*)]
    \item \textbf{\ours-w/o Selector}: only updates the generator $\theta_g$, keeping $\theta_s$ fixed.
    \item  \textbf{\ours-w/o Generator}: only updates the selector $\theta_s$, keeping $\theta_g$ fixed.
\end{enumerate}

Figure~\ref{fig:ablation} reports the F1 scores across five datasets (NQ, HotpotQA, MuSiQue, 2WikiMultihopQA, and WoW) over five training iterations. We observe consistent and notable improvements of the full \ours method over both ablated variants.
On the \textbf{NQ} dataset, the vanilla method achieves an F1 score of \textbf{51.01} at iteration 5, while the two variants, i.e., \ours-\textit{w/o Selector} and \ours-\textit{w/o Generator}, obtain 46.77 and 48.71 respectively, suffering from drops of 4.24 and 2.30 points. 
Similar patterns are observed in other datasets:
\begin{enumerate}[label=(\roman*)]
    \item \textbf{HotpotQA}: \ours achieves 47.87 F1 vs. 44.60 (\ours-\textit{w/o Selector}) and 43.11 (\ours-\textit{w/o Generator}).
    \item \textbf{MuSiQue}: full model achieves 25.32, outperforming 22.59 and 21.77.
    \item \textbf{2Wiki}: the full pipeline reaches 43.65 F1, compared to 40.12 and 39.01.
    \item \textbf{WoW}: \ours improves F1 to 18.31, surpassing 16.47 and 15.73.
\end{enumerate}
These results demonstrate two key findings. First, training the selection model $\theta_s$ is critical for learning high-quality document permutations that benefit the generator. Second, optimizing the generator $\theta_g$ to better utilize selected documents further amplifies performance. The combined training of both components yields cumulative gains that neither alone can achieve.
\begin{remark}
    \todo{The ablation study highlights the necessity of joint optimization in \ours. It validates the design principle that the selector and generator must co-evolve during training to fully realize the benefits of retrieval-augmented generation.}
\end{remark}

\begin{figure*}[!t]
        \centering
	\includegraphics[width=
 \linewidth]{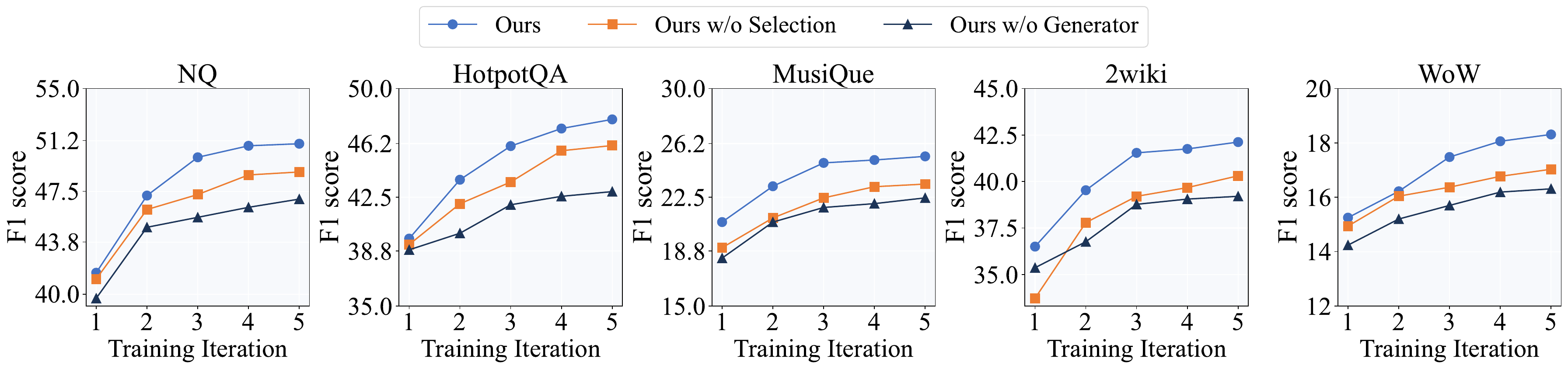}
        \caption{Ablation study on five datasets to demonstrate the effectiveness of training the selection model $\theta_s$ and generator $\theta_g$.}\label{fig:ablation}
\end{figure*}

\begin{figure}[!t]
    \centering
	\includegraphics[width=0.9\linewidth]{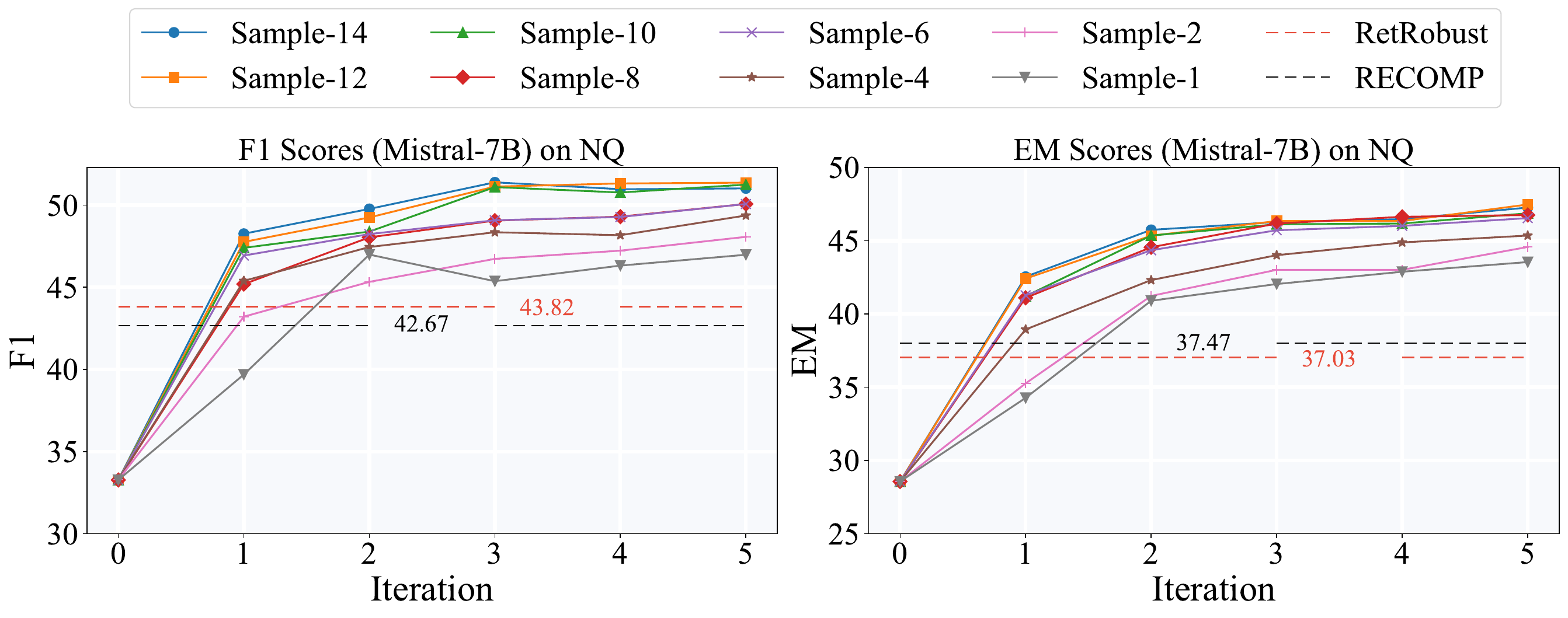}
    \caption{Performance under different sampling number (\S~\ref{sec:sample-number}).}
    \label{fig:sampling}
\end{figure}

\subsection{Human evaluation}
Considering the potential bias of automatic metrics~\citep{Shi2023RADERD}, we conduct a human evaluation with three educated individuals assessing the \textit{correctness} of 100 randomly sampled cases from five benchmarks, using a three-point scale. Each query is paired with the corresponding golden documents and ground truth answers from the original datasets, which serve as references for the human evaluators. 
We ask at least two annotators to evaluate the same case repeatedly. If there is a discrepancy between two annotators, and ask a third annotator to recheck it.
The results are presented in Table~\ref{tab:human}.
The \ours achieves a correctness score of 0.41, while strong open-source baselines only score between 0.32 and 0.37, demonstrating the advantage of our proposed method. The average Kappa statistic for our human evaluation is 0.751, indicating strong agreement.
\begin{table}[htbp]
\centering
\small
\caption{Human evaluation on 100 randomly sampled cases.}
\begin{adjustbox}{width=0.75\columnwidth,center}
\setlength\tabcolsep{7pt}
\begin{tabular}{@{}c cccc@{}}
\toprule
 & \textbf{DPA-RAG} & \textbf{RetRobust} & \textbf{DDR-RAG}  & \textbf{\ours-Mistral} \\
 \midrule
\textbf{Correctness} & 37/100  & 32/100 & 33/100   & 41/100 \\ 
\bottomrule
\end{tabular}
\label{tab:human}
\end{adjustbox}
\end{table}

\section{Discussion}\label{sec:discussion}

\subsection{Impact of the Sampling Number}\label{sec:sample-number}

In our training procedure, for each input query, we sample $m = 8$ document permutations $z$ from the selection model to construct the ELBO objective via importance sampling. To further examine the effect of this sampling number on training dynamics and model performance, we vary $m$ across $\{1, 2, 4, 6, 8, 10, 12, 14\}$ and evaluate the resulting models under the same experimental setup described in Table~\ref{tab:main}.
Figure~\ref{fig:sampling} summarizes the results. We derive several observations from the analysis:
\begin{enumerate*}[label=(\roman*)]
    \item \textbf{Higher sampling improves performance.} Overall, increasing the sampling number $m$ consistently improves the final F1 score. This trend suggests that using more samples provides better approximation of the expected objective, thereby guiding the optimization process more accurately.
    \item \textbf{Sampling number affects convergence speed.} Larger values of $m$ not only lead to better performance but also accelerate convergence. For example, when $m = 14$, the model reaches peak performance within just two training iterations. In contrast, with only $m = 1$, the model requires up to four iterations to achieve similar results. This aligns with our variance analysis in Section~\ref{sec:variance}, as increased sampling reduces training variance and facilitates faster optimization.
    \item \textbf{Trade-off between cost and effectiveness.} While higher values of $m$ yield better and faster results, they also incur greater computational overhead. In practice, we find that using $m = 4$ strikes a good balance, offering competitive performance with reasonable training efficiency.
\end{enumerate*}
These findings highlight the importance of sampling strategies in importance-weighted optimization and suggest practical guidelines for choosing sampling numbers based on resource availability and convergence requirements.


\begin{figure}[htbp]
    \centering
	\includegraphics[width=0.9\linewidth]{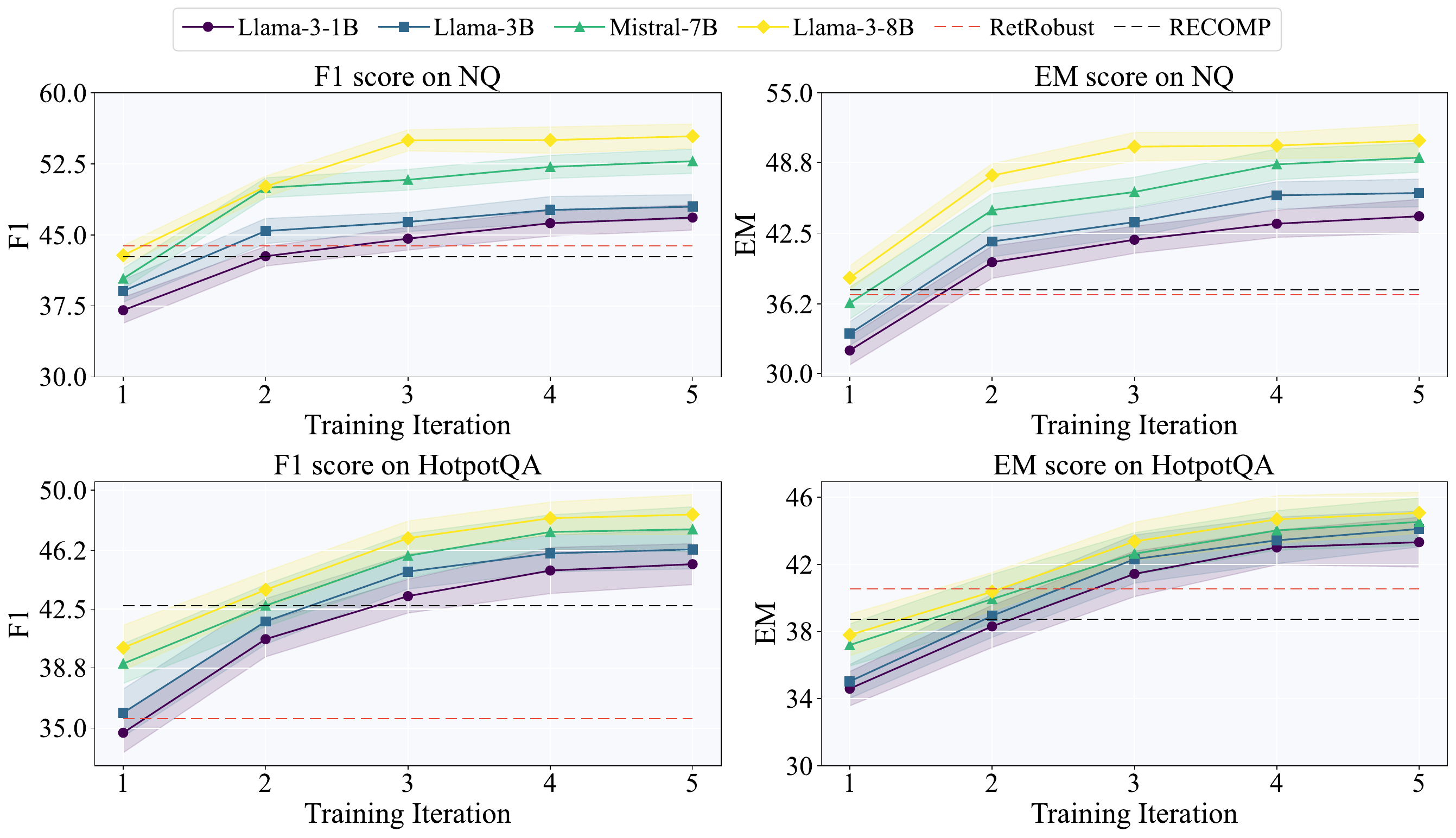}
    \caption{Performance with different parameter sizes of selection model $\theta_s$.}
    \label{fig:scaling}
\end{figure}

\subsection{Performance with scaling-down parameter}
In our experiment, we follow prior work (e.g., RankVicuna~\cite{pradeep2023rankvicuna}) and employ a general-purpose LLM (e.g., Llama) as the backbone for list-wise selection.
Recent studies typically scale up parameter sizes to explore performance upper bounds.
However, since the document selection task involves long context sequences as input with a concern of increased latency, we investigate a more practical low bound by scaling down the size of selection model $\theta_s$ in \ours.
Specifically, we implement our method using Llama-3-8B as the generator and pairing it with smaller LLMs as selection models.
We evaluate the performance under the same conditions as Table~\ref{tab:main} and present the results in Figure~\ref{fig:scaling}.
We observe that as the parameter size of model $\theta_s$ increases from 1B to 8B, performance improves substantially.
The Llama-3-8B generator, when paired with the smallest selector (Llama-3-1B), outperforms strong baselines such as RetRobust, pushing the F1 score from 43.82 to 46.83 on NQ dataset.
Additionally, as an empirical suggestion, training a 3B model (e.g., Llama-3-3B) as the selection model offers a balanced trade-off between performance and computational cost in our method.

\subsection{\todo{Case study}}\label{sec:case}
We conduct case studies to intuitively analyze the advantages and disadvantages of \ours.
Below, we first show the system prompt to the document selection model $\theta_s$ and answer generation model $\theta_g$.
Then, we present the correctly completed cases and the bad cases, respectively, for comparison.

\header{System prompt for document selection models}
In the prompt for the selection model $\theta_s$, we instruct the model to generate a ranked list of useful documents in a generative, auto-regressive manner, where document identifiers are produced in descending order of utility. Documents that contain the correct answer are considered more useful and are expected to be ranked higher.

\begin{lstlisting}[basicstyle=\small\ttfamily, breaklines=true, breakindent=0em, commentstyle=\color{red!50!green!50!blue!50}, frame=shadowbox, rulesepcolor=\color{red!20!green!20!blue!20},numbers=none,literate={`}{\textasciigrave}1]
You are RankLLM, an intelligent assistant that can rank passages based on their relevance and usefulness to the user's query.

I will provide you with {n_docs} passages. Please rank these passages based on their usefulness in answering the user's search query: "{question}". 

A passage's usefulness is defined as:
1. Relevance to the question.
2. Contains necessary information to help the user.

The passages are listed below, each with a numerical identifier [].
{docs}

Rank the {n_docs} passages based on their usefulness in descending order. Use the format [] > [], e.g., [2] > [1]. Only respond with the ranking results; do not provide explanations.

Search Query: {question}
Your output:
\end{lstlisting}

\header{System prompt for answer generation model}
In the prompt for the answer generation model $\theta_g$, we provide the documents selected by the selection model as references, and instruct the model to generate the final answer grounded in the information contained within these documents.

\begin{lstlisting}[basicstyle=\small\ttfamily, breaklines=true, breakindent=0em, commentstyle=\color{red!50!green!50!blue!50}, frame=shadowbox, rulesepcolor=\color{red!20!green!20!blue!20},numbers=none,literate={`}{\textasciigrave}1]
You are an artificial intelligence assistant. You should gives helpful and precise answers to the user's questions based on the context. The context here refer to external passages related to user's questions.

Please answer the question: "{question}" using provided passages. Each passage is indicated by a numerical identifier [].
Here are the related passages for your reference.
{docs}
Question: {question}
Your answer: 
\end{lstlisting}

\header{Correctly completed cases}
Below, we provide a case in which the input query asks: \textit{Which games, Strange Synergy or Qwirkle, is a card game published by Steve Jackson Games?}, with the ground truth being \textit{Strange Synergy}.
To answer this question, it is essential to retrieve not only detailed descriptions of Strange Synergy but also background information about Qwirkle for comparative analysis.

Among the initially retrieved documents, document \texttt{[1]}, \texttt{[2]}, and \texttt{[17]} are all individually necessary to resolve the query.
Specifically, passages \texttt{[1]} and \texttt{[2]} provide overlapping but essential descriptions of Strange Synergy, consistently identifying it as a card game published by Steve Jackson Games. 
Meanwhile, passage \texttt{[17]}, though it offers necessary contrastive information about Qwirkle, ranked much lower initially.
In \ours, the selection model successfully identifies \texttt{[1]}, \texttt{[2]}, and \texttt{[17]} as the top-ranked passages, demonstrating its ability to select  the most relevant target information. Based on these selected passages, the generator compares the publisher information from \texttt{[1]} and \texttt{[17]}, accurately associating Strange Synergy with Steve Jackson Games as the correct answer.
This example highlights the strength of our \ours in enabling the cooperation between the selection and generation models.

\begin{lstlisting}[basicstyle=\small\ttfamily, breaklines=true, breakindent=0em, commentstyle=\color{red!50!green!50!blue!50}, frame=shadowbox, rulesepcolor=\color{red!20!green!20!blue!20},numbers=none,literate={`}{\textasciigrave}1]
INPUT QUERY: Which games, Strange Synergy or Qwirkle,  is a card game published by Steve Jackson Games? (Ground truth: Strange Synergy)

INITIALLY RETRIEVED DOCUMENTS: [0] Title: Strange Synergy. Content: Strange Synergy Strange Synergy is a card game published by Steve Jackson Games in which players build a team of super heroes to battle an opponent's team. The game supports up to four players... [1] Title: Strange Synergy. Content: enemies for three turns in a row, or capture an opposing team's flag. Strange Synergy Strange Synergy is a card game published by Steve Jackson Games in which players build a team of super heroes to battle an opponent's team. The game supports up to four players. However, there are bases and flags for two more players. [2] ... [3] ... [4] ... [5] ... [6] ... [7] ... [8] ... [9] ... [10] ... [11] ... [12] ... [13] ... [14] ... [15] ... [16]... [17] Title: Qwirkle. Content: Qwirkle Qwirkle is a tile-based game for 2 to 4 players, designed by Susan McKinley Ross and published by MindWare. Qwirkle shares some characteristics with the games Rummikub and Scrabble. It is distributed in Canada by game and puzzle company, Outset Media. Qwirkle is considered by MindWare to be its most awarded game of all time. [18] ... [19] ...

OUTPUT OF SELECTION MODEL: [1] > [2] > [17]
OUTPUT OF GENERATOR MODEL: Strange Synergy
\end{lstlisting}

\header{Bad case of \ours}
\todo{We also investigate the bad case of \ours to analyze potential disadvantages of our method.
The Table~\ref{tab:bad_cases} summarize two key type of errors: incorrect selection and generation mismatch.
First, the \textit{incorrect selection} indicates that the selector fails to comprehensively include all  documents that are necessary for answer the question.
Without such contrastive evidence, the generator tends to produce incorrect answers, even when relevant documents are present in the retrieved pool.
This suggests the need for diversity-aware selection strategies that explicitly promote comparative reasoning.
Second, the \textit{generation mismatch} indicates that the generator model produce a incorrect answer though the ground truth documents have been incorporated into the context.
This often occurs when the input contains partially relevant or distracting content, i.e., the noise.
These cases highlight the challenge of robust generation.}


\begin{table}[htbp]
\centering
\caption{Representative failure cases of DRO. We summarize two typical types of failure, i.e., selection and generation mismatch, and suggest potential remedies.}

\label{tab:bad_cases}
\begin{adjustbox}{width=\columnwidth,center}
\setlength\tabcolsep{14pt}
\begin{tabular}{p{3.2cm} | p{3.5cm} | p{3.5cm} | p{2.8cm}}
\toprule
\textbf{Case Type} & \textbf{Description} & \textbf{Observed Behavior} & \textbf{Potential Remedies} \\
\midrule
Selector misses contrastive document & 
The selector fails to comprehensively include documents (e.g., different publishers), which are necessary for disambiguation. & 
The generator outputs an incorrect answer due to the lack of comparative evidence. & 
Incorporate multi-view selection or contrastive supervision to enforce diversity in document selection. \\
\midrule
Generator is misled by noisy content & 
Although the selector provides useful documents, the generator fail to generate the correct answer. & 
The LLM misunderstands the input documents and predicts incorrect answer & 
Introduce answer verification module chain-of-thought reasoning process for LLM before generating the final answer. \\
\bottomrule
\end{tabular}
\end{adjustbox}
\end{table}

\section{Conclusion}\label{sec:conclusion}

We have presented \ours, a direct retrieval-augmented optimization method for the RAG task that (i) treats document permutations of the retriever as latent variables, and (ii) enables the co-training of a list-wise document selection model and an LLM generator through an expectation-maximization approach.  
Specifically, \ours alternates between two steps:  
(1) directly estimating the distribution of document permutations from the selection model using an importance sampling strategy; and  
(2) maximizing the importance-weighted evidence lower bound to jointly train the selection model and the LLM generator.
Through theoretical analysis, we have proven that the learning objectives in \ours are analogous to policy-gradient reinforcement learning, reinforcing the selection and generation models with an end-to-end training reward.
Extensive experiments conducted on five datasets have validated the superiority of our \ours.
For future work, we aim to extend this approach to multi-modal RAG scenarios.  
We also plan to explore the co-training of additional retrieval modules within \ours, such as a query re-writer.

\clearpage
\newpage
\bibliographystyle{ACM-Reference-Format}
\bibliography{references}

\end{document}